\newcommand{\alphabet}[1] { {\mathsf #1}}
\newcommand{\thetasijc}{\theta^*_{ij,c}}
\newcommand{\thetasijs}{\theta^*_{ij,s}}
\newcommand{\thetaijc}{\theta_{ij,c}}
\newcommand{\thetaijs}{\theta_{ij,s}}
\newcommand{\Xkijc}{X^{(k)}_{ij,c}}
\newcommand{\Xkijs}{X^{(k)}_{ij,s}}
\newcommand{\Xijc}{X_{ij,c}}
\newcommand{\Xijs}{X_{ij,s}}
\newcommand{\yk}{y^{(k)}}
\newcommand{\Yk}{Y^{(k)}}
\newcommand{\benum}{\begin{enumerate}}
\newcommand{\eenum}{\end{enumerate}}
\newcommand{\reals}{\mathbb{R}}
\newcommand{\parenth}[1] {\left(#1\right)}
\newcommand{\braces}[1] {\left\{#1\right\}}
\newcommand{\brackets}[1] {\left[#1\right]}
\newcommand{\abs}[1] {\left|#1\right|}
\newcommand{\norm}[1] {\left\|{#1}\right\|}
\newcommand{\cY}{\alphabet{Y}}
\newcommand{\vY}{\underline{Y}}
\newcommand{\vy}{\underline{y}}
\newcommand{\beqa}{\begin{eqnarray}}
\newcommand{\eeqa}{\end{eqnarray}}
\newcommand{\beqas}{\begin{eqnarray*}}
\newcommand{\eeqas}{\end{eqnarray*}}
\newcommand{\prob}[1] {\mathbb{P}\parenth{#1}}
\newcommand{\utheta}{\underline{\theta}}
\newcommand{\uthetau}{\utheta_u}
\newcommand{\ISO}{S_n(\uthetau)}
\newcommand{\sumijEu}{\sum_{(i,j) \in E_u}}
\newcommand{\ut}{\underline{t}}
\newcommand{\utk}{\ut^{(k)}}
\newcommand{\uphi}{\underline{\phi}}
\newcommand{\E}{\mathbb{E}}
\newcommand{\innerprod}[2]{\langle #1,#2 \rangle}
\newcommand{\given}[1]{\;#1|\;}
\newtheorem{theorem}{Theorem}[section]
\newtheorem{lemma}{Lemma}[section]
\newtheorem{condition}{Condition}[section]
\newtheorem{proposition}{Proposition}[section]
\newtheorem{corollary}{Corollary}[lemma]
\theoremstyle{definition}
\newtheorem{definition}{Definition}[section]
\title{Graphical models and efficient Inference methods for multivariate phase probability distributions}
\author{ \href{https://orcid.org/0000-0003-2962-6245}{\includegraphics[scale=0.06]{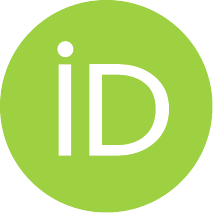}\hspace{1mm}Andrew S. Perley}\thanks{Use footnote for providing further
		information about author (webpage, alternative
		address)---\emph{not} for acknowledging funding agencies.} \\
	Department of Bioengineering\\
	Stanford University\\
	Stanford, CA 94305 \\
	\texttt{aperley@stanford.edu} \\
	\And
	\href{https://orcid.org/0000-0002-2144-2495}{\includegraphics[scale=0.06]{orcid.pdf}\hspace{1mm}Todd P. Coleman} \\
	Department of Bioengineering\\
	Stanford University\\
	Stanford, CA 94305 \\
	\texttt{toddcol@stanford.edu} 
}
\begin{document}
\maketitle

\begin{abstract}
Multivariate phase relationships are important to characterize and understand  numerous physical, biological, and chemical systems, from electromagnetic waves to neural oscillations.  These systems exhibit complex spatiotemporal dynamics and intricate interdependencies among their constituent elements. While classical models of multivariate phase relationships, such as the wave equation and Kuramoto model, give theoretical models to describe phenomena, the development of statistical tools for hypothesis testing and inference for multivariate phase relationships in complex systems remains limited. This paper introduces a novel probabilistic modeling framework to characterize multivariate phase relationships, with wave-like phenomena serving as a key example.  This approach describes spatial patterns and interactions between oscillators through a pairwise exponential family distribution. Building upon the literature of graphical model inference, including methods like Ising models, graphical lasso, and interaction screening, this work bridges the gap between classical wave dynamics and modern statistical approaches. Efficient inference methods are introduced, leveraging the Chow-Liu algorithm for directed tree approximations and interaction screening for general graphical models. Simulated experiments demonstrate the utility of these methods for uncovering wave properties and sparse interaction structures, highlighting their applicability to diverse scientific domains. This framework establishes a new paradigm for statistical modeling of multivariate phase relationships, providing a powerful toolset for exploring the complexity of these systems.
\end{abstract}

\keywords{Waves, Chow-Liu, Graphical Models, Interaction Screening}

\section{Introduction}
Structured multivariate phase relationships are commonly observed in the natural world and are a key subject of interest across disciplines such as physics, chemistry, and biology.  One key example of structured phase relationships is that of waves, which underpin  countless natural and engineered systems from the sound and light to the oscillatory dynamics of neural processes and cardiac tissue \cite{surface_em_waves,plant_sound,cardiac_Waves,Muller_Chavane_Reynolds_Sejnowski_2018}. Classic equations like the wave equation, the Korteweg–de Vries (KdV) equation, and the Kuramoto model for coupled oscillators provide foundational insights into modeling multivariate phase relationships. However, while these models excel at describing these complex behavior theoretically, they do not provide statistical frameworks for analyzing empirical data, limiting their applicability to hypothesis testing and inference. Recent work in neuroscience, inspired by traveling waves in the brain, has made headway into reducing falste positives in detecting waves by physics-inspired means of surrogate testing methods to detect plane waves, rotating waves, and source/sink waves \cite{muller2016rotating}. Other methods ask the question if two different multivariate phase relationships are similar to each other by assessing their phase similarity and generating motifs \cite{davis2024horizontal}. While these methods are very powerful in reducing false positives in detecting specific pre-defined types of wave-like phenomena, they lack the flexibility of statistically inferring the multivariate phase relationships themselves by means of a probability distribution. 

The ability to probabilistically characterize multivariate phase relationships and uncover interactions among oscillating elements has profound implications. For example, identifying spatiotemporal patterns in neuronal oscillations can shed light on how spontaneous neural processes inter-relate in the resting state as well as how traveling waves mediate information transfer during motor tasks and gate sensory perception \cite{vidaurre2018spontaneous,rubino2006propagating,davis2020spontaneous,gonzales2025touch}.  As another example, characterizing multivariate phase statistics and their relationship with other covariates (e.g. neural firing) has important implications in describing how long-range neural interactions coordinate cell assemblies \cite{canolty2010oscillatory}.  
Additionally, wave patterns in gastric myoelectric activity have been previously shown to correlate with clinically validated gastric symptom severity scores \cite{gharibans2019spatial,agrusa2022robust}. In fields such as chemistry, uncovering interactions in chemical oscillators can reveal mechanisms underlying reaction-diffusion systems \cite{reaction_diffusion}.

The aforementioned tasks require tools capable of modeling the inherent complexity of interactions in multivariate phase relationships, including their spatial patterns and the dependencies between oscillators. To help move towards a solution to this problem, we state two desiderata when it comes to the design of a probability distribution:

\begin{enumerate}
    \item a statistical characterization that captures the spatial patterns of wave-like phenomena
    \item a description of the important interactions between oscillators.
\end{enumerate}

To address our first desideratum, we approach the problem by modeling wave-like phenomena through multivariate phase relationships. Since phase is a circular random variable defined on \([-\pi, \pi)\), this naturally leads to the consideration of phenomena on the \(n\)-torus. Statistical modeling of multivariate circular distributions on the \(n\)-torus has become increasingly important in fields such as bioinformatics and climatology, where it is used to analyze complex phenomena like protein structures and wind directions \cite{protein_pred,wind_direction}. 

In the univariate case, circular distributions, such as the von Mises distribution or wrapped analogs of distributions with support on the real line, such as Gaussians, are commonly used. Building on this foundation, recent advancements have introduced multivariate extensions, including the multivariate von Mises and multivariate wrapped distributions, which offer powerful tools for modeling interdependent circular data \cite{MvM,mGvM,wrapped_dist_review}.

Approaches to our second desideratum can be found in the larger field of graphical model inference. This is because  interactions between oscillators can be viewed as interactions between two circular random variables in a graph. This allows us to view the whole problem as a probabilistic graphical model. Approaches to learning graphical models can date back all the way to maximum likelihood estimation of multivariate distributions. 
One approach for instance that has been attractive within the neuroscience functional imaging community is estimating the inverse covariance matrix  of multivariate Gaussians.  In this context, the $(i,j)$ entry of the inverse covariance matrix being zero encodes conditional independence between variable $i$ and $j$ given all other nodes, thus eliminating indirect effects  \cite{huang2010learning,smith2011network}. The identification of a sparse set of interactions is built upon the more recent work involving $\ell_1$ regularized estimation. Most notably, the graphical lasso solves this problem in the cases of multivariate Gaussians \cite{friedman2008sparse}.
This work has been extended to Markov random fields and Bayesian networks more generally. For example, work by Vuffray et al, propose an interaction screening objective to estimate sparse Ising models from data in a sample efficient manner, and work by Shah et al, has also extended this estimator to bounded continuous real random variables whose pairwise interaction terms can be described by a Kronecker product \cite{vuffray_ising,Shah_Shah_Wornell_2021}. Within the specific context of multivariate phase interactions, Cadieu and Koepsell proposed a distribution and method to solve this problem using a score function methodology; however, they propose an $\ell_2$ regularized estimator and neither show theoretical guarantees on structure learning  nor sample complexity\cite{cadieu2010phase}.

In this work, we propose a probabilistic model of wave-like phenomena that allows for both hypothesis testing and structure learning. We use two methods of approximating/fitting the joint distribution on phase data: 1) a directed tree approximation and 2) an interaction screening approach first described  in the case of Ising models \cite{Chow_Liu_1968,vuffray_ising}. Our approach for analysis of the ``interaction screening objective'' (ISO) generalizes the Ising model, using circular symmetries and properties of minimal exponential families to guarantee theoretical guarantees  in terms of both structure learning and sample complexity. Our analysis extends the work by  also extend analysis by Shah et al. by considering edge potentials between node $i$ and node $j$ that are of the form $\cos(y_j-y_i)$, and thus cannot be written as a Kronecker product of terms only in $y_j$ and $y_i$ \cite{Shah_Shah_Wornell_2021}.  

In \cref{sec:statmodel}, we give preliminary definitions and develop the statistical model of multivariate phase relationships we use throughout the the manuscript.  In \cref{sec:treeApproximation}, we consider the optimal tree approximation and how it can be efficiently estimated with closed form expressions.  In \cref{sec:ISO}, we consider the more general problem of inferring the more general problem for an arbitrary graph, and discuss $\ell_1$ regularized convex optimization using ``interaction screeing'' approaches that can efficiently perform structure learning and have attractive sample complexity properties.     In \cref{sec:experiments}, we conduct controlled experiments where different spatiotemporal patterns are generated and we show the method's ability to identify the patterns and perform successful inference with log likelihood ratio tests for classification. In \cref{sec:discussion}, we conclude with a discussion and future steps.

\section{Statistical Modeling of Multivariate Phase Relationships}
\label{sec:statmodel}

\subsection{Preliminaries}
\newcommand{\vw}{\underline{w}}
\newcommand{\vv}{\underline{v}}
We begin by defining some basic quantities that we will use in this paper. We denote a length-$d$ vector as $\vv \in \reals^d$.
We denote the inner product between $\vw,\vv \in \reals^d$ as $\innerprod{\vw}{\vv} = \vw^T \vv$.  For $\vv \in \reals^d$, define  $\|\vv\|_\infty = \max_{i=1,\ldots, d} |\vv_i|$, \quad $\vv\|_2 = \sqrt{\sum_{i=1}^d v_i^2}$.

We will  use the following trigonometric identity throughout the manuscript:
\beqa
\cos(a-b)&=&\cos a \cos b + \sin a \sin b \label{eqn:trigIdentity:a}
\eeqa

Let $P$ and $Q$ be two distributions with support over the circle, $\cY=[-\pi,\pi)$, and have probability density functions $f(y)$ and $g(y)$ respectively.

\begin{definition}[Relative Entropy]
    The relative entropy from $P$ to $Q$ is given as
    \begin{align}
        D(P\|Q) &=  \E_P\brackets{{\log{\frac{f(Y)}{g(Y)}}}} \nonumber\\
        &= \int_{\cY} f(y){\log{\frac{f(y)}{g(y)}}} dy.
    \end{align}
\end{definition}
While the relative entropy is not a distance since it is not symmetric, it is often used a measure of how "far" two probability distributions are from each other. It can also be interpreted as the expected cost of miscoding a sequence of random variables using distribution $Q$, when the random variables are actually drawn according to distribution $P$ \cite{cover1999elements}.

Now consider a pair of random variables, $X$ and $Y$ that have joint density $P_{X,Y}$ and are marginally distributed according to $P_{X}$ and $P_{Y}$ respectively. 

\begin{definition}[Mutual Information]
    Consider random variables $X,Y$ with joint density $f_{X,Y}(x,y)$.  The mutual information between $X$ and $Y$ is defined as follows:
    \begin{subequations}
    \begin{align}
        I(X;Y) &= \int_x D(P_{Y|X=x} \| P_Y) f_X(x) dx  \label{eqn:defn:MutualInformation:a}\\
               &=\int_{x,y} f_{X,Y}(x,y) \log\frac{f_{X,Y}(x,y)}{f_X(x)f_Y(y)} dx dy. \label{eqn:defn:MutualInformation:b}
    \end{align} \label{eqn:defn:MutualInformation}
    \end{subequations}
\end{definition}
It is often interpreted as the amount of information gained about $Y$ by observing $X$ and vice versa \cite{cover1999elements}.

\begin{definition}[Restricted Strong Convexity]
    Consider a function $f(x)$ with minimizer $x^*$. Define 
  $\delta f(\Delta)$ as the error in the Taylor expansion at $x^*$:
    \begin{align}
    \delta f(\Delta) &\triangleq f(x^* + \Delta) - f(x^*) - \innerprod{\nabla f(x^*)}{\Delta}. \label{eqn:defn:Taylorerror}
    \end{align}    
    Then $f$ is restricted strongly convex with respect to a set $K$, if on a ball of radius $R$, for all $\Delta \in K$ such that $\norm{\Delta}_2 \leq R$, there exists some constant $\gamma$ such that
    \begin{eqnarray*}
        \delta f(\Delta) \geq \gamma \norm{\Delta}_2^2.
    \end{eqnarray*}
\end{definition}

\begin{definition}[Exponential Family]
    An exponential family is a parametric set of distributions such that for a parameter vector $\utheta = [\theta_1, \cdots, \theta_p]$, the distribution over the random vector $\underline{Y}$ can be written in the form
    \begin{align*}
        f_{\underline{Y}}(\underline{y};\utheta) = h\parenth{\underline{y}}\exp\parenth{
        \innerprod{\eta\parenth{\utheta}}{t\parenth{\underline{y}}}-A\parenth{\utheta)}},
    \end{align*}
    where $t\parenth{\underline{y}}$, $\eta\parenth{\utheta}$, $h(x)$, and $A\parenth{\utheta}$ are known functions and $h\parenth{\underline{y}}$ is nonnegative. The function $\eta\parenth{\utheta}$ is known as the natural parameters of the exponential family. \label{def:exp-fam}
\end{definition}

\begin{definition}[Minimal Exponential Family]
    An exponential family is minimal if there does not exist a vector $\underline{a}$ for which
    \begin{align*}
        \sum_{\alpha \in \mathcal{I}} a_{\alpha}\phi_{\alpha}(y) = c,
    \end{align*}
    where $\phi_{\alpha}(y)$ is the sufficient statistic, $\mathcal{I}$ is the index set that indexes the natural parameters of the exponential family model, and $c$ is a constant. \label{def:min-exp-fam}
\end{definition}

\begin{definition}[Bernstein's Inequality]
Let $X_1, \ldots, X_n$ be independent zero-mean bounded random variables with $\abs{X_i} \leq M$. Let $t$ be any positive number. Then the tail probabilities can be bounded as:
\begin{align*}
    \prob{\sum_{i=1}^n X_i \geq t} \leq \exp\parenth{-\frac{\frac{1}{2}t^2}{\sum_{i=1}^n \E\brackets{X_i^2} + \frac{1}{3}M t}}.
\end{align*}
\label{defn:bernstein}
\end{definition}

\subsection{Descriptions of Spatial Patterns of Waves}
In the study of waves, a natural starting point is the well known wave equation in physics given by
\begin{align}
    \frac{\partial^2 u}{\partial t^2} = \nabla^2 u, \label{eq:waveeq} 
\end{align}

where $u$ is a function of both space and time (e.g., a voltage waveform given by $u(x,y,t)$). This fundamental equation involves relating derivatives in space to derivatives in time, which implies spatiotemporal differences are very important to the characterization of a wave. Thus, for our probabilistic model it will be very important that we can characterize differences of this type. To help us understand this, we consider eigenfunction solutions to the wave equation. It is well known that eigenfunctions can take the form of complex exponentials, whose real and imaginary components are simple sinusouids. Consider the following solution to the wave equation
\begin{align}
    u(x,y,t) &= e^{j\phi(x,y,t)} \label{eq:sinewave}\\
    \phi(x,y,t) &= K_x x + K_y y - \omega t \label{eq:phi}
\end{align}
where $K_x$ and $K_y$ are wave numbers in the x and y directions, and $\omega$ is the angular frequency in time. While \eqref{eq:waveeq} operates on this whole solution, note that all of the description of the spatiotemporal patterns happens with the linearity of spatiotemporal phase relationships. Specifically, from \eqref{eq:phi}, we can immediately see that this linear relationship relationship is key to describing the spatiotemporal dynamics of plane waves. Moreover, given the gradient of the phase, 
\begin{align*}
    \nabla \phi(\cdot,\cdot,t) = \begin{bmatrix}
        K_x & K_y
    \end{bmatrix}^\top,
\end{align*}
we can understand what the relevant spatial patterns in the wave are at any given moment in time. 
Now suppose we are recording a wave at discretely sampled locations, $\{(x_i,y_i)\}_{i=1}^p$. Given a wave described by Eq. \eqref{eq:sinewave}, it now suffices to understand the discrete analog to the derivatives - the phase differences. Thus, we must have an explicit representation of the phase differences in our model given that our data is limited to a finite set of measurement locations.

\subsection{Interactions between Oscillators and Conditional Independence}
\label{sec:condindep}

Our second desiderata leaves us with the question: how can we describe which oscillators are interacting with which other oscillators? The more general form of this question is actually of great importance to those studying graphical models and is known to some as 'structure learning' \cite{wainwright2008graphical}. In this problem, we consider a probabilistic graphical model where we have some graph, $\mathcal{G} = (V,E)$ with distribution $P$, where $V$ is the vertex/node set containing our random variables, $E$ is the edge set describing interactions between nodes in $V$, and $P$ is a joint distribution over all nodes in $V$. Here, the edge set is defined by the conditional dependencies between random variables in the graph. That is, if node $i$ is conditionally independent of node $j$ given all other nodes, then the edge $(i,j) \notin E$ \cite{wainwright2008graphical}. This makes intuitive sense in the case of oscillators because if oscillator $i$ and $j$ share no more in common than the information from their other neighbors, then the two oscillators should not be interacting with each other.

It is important to note that the spatial phase relationship given by our first desiderata is not sufficient to capture the question of interactions. If a coupled phase phenomena were to follow the relationship implied in \eqref{eq:sinewave}, then it would be the case that every oscillator is equally as informative as any other oscillator. Thus, the question of interaction between oscillators would reveal no interesting relationship. However, in physical systems where it is known that there exist direct connections between oscillators, such as between different populations of neurons in the brain, but those connections are unknown, conditional independence can help provide a lens into that interaction.

\subsection{A Pairwise Exponential Family Distribution for Waves}

Given our two desiderata, we take additional inspiration from the graphical models community and seek a distribution that can encode both the phase offsets and conditional independencies in its parameters. We consider the following distribution:
\begin{eqnarray}
    f_{\vY}(\vy) &=& \frac{1}{Z} \exp \parenth{\sum_{(i,j) \in E} \kappa_{ij}\cos(y_j-y_i-\mu_{ij})}, \label{eq:jointdensity}
\end{eqnarray}
where $\kappa_{ij}$ is the coupling parameter between oscillators $i$ and $j$, $\mu_{ij}$ is the phase offset between oscillators $i$ and $j$, and $Z$ is the partition function that normalizes the distribution to integrate to one. 
We note that the distribution in \eqref{eq:jointdensity} can be interpreted as the equilibrium distribution of a Kuramoto system of inhomogeneous coupled oscillators where oscillator $i$ and oscillator $j$ have coupling $\kappa_{i,j}$ and inhomogeneous phase offsets given by $\mu_{i,j}$ with zero-mean stochastic oscillatory fluctuations \cite{cadieu2010phase}.  
The consideration of phase offsets appear to be an important feature of models, pertaining to axon conduction delays, in order to relate these coupled oscillations to traveling waves in the brain \cite{muller2021algebraic,davis2021spontaneous,budzinski2023analytical,gonzales2025touch}.

We now consider the conditional independence properties of this distribution as desired in \Cref{sec:condindep}. 

\begin{lemma}
    For any random vector distributed according to \eqref{eq:jointdensity}, $\kappa_{ij} = 0$ is equivalent to conditional independence between $Y_i$ and $Y_j$ given all nodes except for $i$ and $j$ \label{lemma:condindep}
\end{lemma}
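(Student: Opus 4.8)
The plan is to use the standard factorization characterization of conditional independence: for a strictly positive density, $Y_i$ and $Y_j$ are conditionally independent given the remaining coordinates $\vy_{\setminus ij}$ if and only if the conditional density factorizes, i.e. $f(y_i, y_j \mid \vy_{\setminus ij}) = g(y_i, \vy_{\setminus ij})\, h(y_j, \vy_{\setminus ij})$ for some nonnegative functions $g,h$. Since the density in \eqref{eq:jointdensity} is everywhere positive and smooth, this is equivalent to the mixed second partial derivative of the conditional log-density vanishing identically,
\[
\frac{\partial^2}{\partial y_i \, \partial y_j} \log f(y_i, y_j \mid \vy_{\setminus ij}) = 0 \quad \text{for all } y_i, y_j \in \cY.
\]
I will prove both implications through this criterion.

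For the forward direction ($\kappa_{ij} = 0 \Rightarrow$ conditional independence), I note that because \eqref{eq:jointdensity} is a product of pairwise factors, each factor $\exp(\kappa_{kl}\cos(y_l - y_k - \mu_{kl}))$ depends on exactly one pair of coordinates. When $\kappa_{ij} = 0$ (equivalently $(i,j) \notin E$), no surviving factor depends on both $y_i$ and $y_j$ simultaneously. Grouping the factors that involve $y_i$, those that involve $y_j$, and those that involve neither then yields a factorization of the form $g(y_i, \vy_{\setminus ij})\, h(y_j, \vy_{\setminus ij})$ after absorbing the partition function into the conditioning, which gives conditional independence immediately.

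For the reverse direction (conditional independence $\Rightarrow \kappa_{ij} = 0$), I compute the cross derivative explicitly. Writing out $\log f(\vy)$ from \eqref{eq:jointdensity}, the partition function $\log Z$ is constant and drops out, and any edge potential not simultaneously involving $y_i$ and $y_j$ has zero mixed derivative with respect to $y_i$ and $y_j$. Hence the only surviving term is the $(i,j)$ potential, and a direct computation using \eqref{eqn:trigIdentity:a} gives
\[
\frac{\partial^2}{\partial y_i \, \partial y_j} \log f(\vy) = \kappa_{ij} \cos(y_j - y_i - \mu_{ij}).
\]
Conditional independence forces the left-hand side to vanish for every $(y_i, y_j) \in \cY^2$; since $\cos(y_j - y_i - \mu_{ij})$ is a nonconstant function on the torus and in particular is not identically zero, we conclude $\kappa_{ij} = 0$.

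The main obstacle I anticipate is not the derivative computation itself but rigorously justifying the factorization criterion and its derivative form, namely that conditional independence for this continuous, circular-valued model is exactly equivalent to additive separability of the conditional log-density. One must invoke positivity of the density (so that conditioning and taking logarithms is well defined) and smoothness (so that the mixed-partial characterization of separability applies). A fully rigorous argument could instead lean on the linear independence of $\{1, \cos y_i, \sin y_i, \cos y_j, \sin y_j, \cos(y_j - y_i - \mu_{ij})\}$ as functions on the torus, connecting to the minimal exponential family structure of \Cref{def:min-exp-fam}; this shows directly that the interaction term cannot be absorbed into any sum of univariate functions of $y_i$ and $y_j$ unless $\kappa_{ij} = 0$.
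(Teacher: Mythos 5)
Your proof is correct, and on the converse direction it is actually more complete than the paper's. Your forward direction coincides with the paper's argument: both rewrite \eqref{eq:jointdensity} as a product of pairwise factors, observe that the $(i,j)$ factor is the unique term coupling $y_i$ and $y_j$, and group the surviving factors into one function free of $y_i$ and one free of $y_j$. The difference lies in the reverse implication, which the paper dismisses with the single sentence ``the other direction can be shown by contradiction,'' providing no details. You supply a concrete mechanism: for a strictly positive smooth density, conditional independence is equivalent to additive separability of the conditional log-density, hence to the identical vanishing of $\frac{\partial^2}{\partial y_i\,\partial y_j}\log f$; since $\log Z$ is constant and every other edge potential involves at most one of $y_i,y_j$, this mixed partial equals $\kappa_{ij}\cos(y_j-y_i-\mu_{ij})$ (only the chain rule is needed here, not \eqref{eqn:trigIdentity:a}), which cannot vanish identically unless $\kappa_{ij}=0$. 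The separability criterion you flag as the main obstacle is standard and unproblematic in this setting: if a smooth function $F$ has identically vanishing mixed partial, then $F(y_i,y_j)=F(y_i,y_j^0)+F(y_i^0,y_j)-F(y_i^0,y_j^0)$ is additively separable, and positivity and smoothness of \eqref{eq:jointdensity} are immediate since it is the exponential of a finite sum of sinusoids. What your route buys is a checkable, quantitative converse where the paper leaves a gap; it also dovetails with the paper's later minimality argument in \Cref{lemma:min-exp-fam}, which rests on the same underlying fact that the interaction sinusoid $\cos(y_j-y_i-\mu_{ij})$ cannot be absorbed into univariate functions of $y_i$ and $y_j$.
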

\begin{proof}
    For any integer $u \in V$, define $\vy_{-u}$  to be the vector of observations on nodes excluding node $u$: $\vy_{-u} ={y_j: j \in V \setminus \braces{u}}$. By definition, if $Y_i$ and $Y_j$ are conditionally independent, the full joint density can be factored as
    \begin{eqnarray*}
        f_{\vY}(\vy) = h(\vy_{ -i})g(\vy_{-j}),
    \end{eqnarray*}
    where $h$ and $g$ are functions of all components besides $i$ and $j$ respectively. Letting $\kappa_{ij} = 0$, and noting that \eqref{eq:jointdensity} can be rewritten as
    \begin{eqnarray*}
        f_{\vY}(\vy) &=& \frac{1}{Z} \prod_{(i,j)\in E} \exp \parenth{\kappa_{ij}\cos(y_j-y_i-\mu_{ij})},
    \end{eqnarray*}
    we see that the only cross-term between $Y_i$ and $Y_j$ is eliminated and thus the joint distribution can be factored in a conditionally independent form. The other direction can be shown by contradiction.
\end{proof}
Thus, by \Cref{lemma:condindep}, estimation of the $\kappa_{ij}$ will allow us to infer the structure of the graphical model, while on the other hand, estimation of the $\mu_{ij}$ will give us information about the spatial pattern of phases.

This distribution actually generalizes the classical XY model in statistical physics, commonly used to model continuous magnetization on a lattice \cite{romano2002xy}. Specifically, by adding in the offset terms, $\mu_{ij}$, we allow for out-of-phase synchronization that is not in the classical XY model and can be applicable to a more general class of phase relationships \cite{cadieu2010phase}. We note that the model given by \eqref{eq:jointdensity} is also of the exponential family and is a product of pairwise von-Mises interaction terms, where the von-Mises density is given by
\begin{align}
    f_Y(y) = \frac{1}{2\pi I_0(\kappa)}\exp\parenth{\kappa\cos(y-\mu)} \label{eqn:defn:vonMises}
\end{align}
and $I_0(\cdot)$ is a modified Bessel function of order zero. We also observe that by \Cref{def:exp-fam}, the model specified by \eqref{eq:jointdensity} is of the exponential family, by using \eqref{eqn:trigIdentity:a}:
\begin{align}
    f_{\vY}(\vy; \theta) &= \frac{1}{Z(\utheta)} \exp \parenth{\sum_{(i,j) \in E} \thetaijc\cos(y_j-y_i)+\thetaijs\sin(y_j-y_i)}, \label{eq:naturaljointdensity}
\end{align}
where $\thetasijc = \kappa_{ij}\cos(\mu_{ij})$ and $\thetasijs = \kappa_{ij}\sin(\mu_{ij})$ are the model's natural parameters, $h\parenth{\underline{y}} = 1$, $A\parenth{\utheta} = \log\parenth{Z}$, and the sufficient statistic $T\parenth{\underline{y}}$ is the vector of $\sin$'s and $\cos$'s of pairwise differences.  This transformation will make our inference problem easier to solve.  
While this distribution satisfies our two desiderata, solving for the maximum likelihood parameters of this distribution is well known to be a difficult problem. In particular, with samples $\vy^{(1)}, \ldots, \vy^{(n)}$, we would have to solve the following maximum likelihood estimation problem:
\begin{eqnarray*}
\hat{\utheta} &=& \arg\min_{\utheta} \;\;\frac{1}{n} \sum_{k=1}^n -\log f_{\vY}(\vy^{(k)}; \theta) \nonumber \\
 &=& \arg\min_{\utheta} \;\;\log Z(\utheta) + \frac{1}{n}\sum_{k=1}^n \sum_{(i,j)\in E} \thetaijc \cos(\yk_j-\yk_i) +\thetaijs \sin(\yk_j-\yk_i), \label{eq:MLE}
\end{eqnarray*}

where the partition function $Z(\utheta)$ is given by
\begin{align*}
    Z(\utheta) = \int_{\vy \in \cY^p } \exp\parenth{\sum_{(i,j)\in E} \thetaijc \cos(y_j-y_i) +\thetaijs \sin(y_j-y_i)} d\vy
\end{align*}

We see here that the partition function is not only a high-dimensional integral, but also a function of the parameters, which makes this optimization problem very computationally expensive. In the following section, we will introduce both the Chow-Liu  and interaction screening methods to either approximate the distribution or efficiently estimate its parameters. 

\subsection{Useful Properties of the Multivariate Phase Model}
\subsubsection{Minimal Exponential Family Property}
\label{sec:minimalexpfam}

In this section, we show the following lemma:

\begin{lemma}
    The model given in \eqref{eq:naturaljointdensity} is a minimal exponential family. \label{lemma:min-exp-fam}
\end{lemma}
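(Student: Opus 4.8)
The plan is to verify minimality straight from \Cref{def:min-exp-fam} by showing that the sufficient statistics are linearly independent modulo constants: assuming
\[
\sum_{(i,j)\in E}\brackets{a_{ij,c}\cos(y_j-y_i)+a_{ij,s}\sin(y_j-y_i)}=c\qquad\text{for all }\vy\in\cY^p,
\]
I will deduce that every $a_{ij,c}=a_{ij,s}=0$ (and hence $c=0$), which is precisely the absence of the dependence forbidden by \Cref{def:min-exp-fam}. Throughout I would work on the torus $\cY^p=[-\pi,\pi)^p$ with the normalized uniform measure, for which the real trigonometric system $\braces{1}\cup\braces{\cos\innerprod{\underline m}{\vy},\ \sin\innerprod{\underline m}{\vy}}$, indexed by frequency vectors $\underline m\in\mathbb{Z}^p$ (taken up to sign), is orthogonal.

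The conceptual heart of the argument is that each statistic is already a single element of this orthogonal system: with $\underline e_i$ the $i$th standard basis vector and $\underline m_{ij}\triangleq\underline e_j-\underline e_i$, we have $\cos(y_j-y_i)=\cos\innerprod{\underline m_{ij}}{\vy}$ and $\sin(y_j-y_i)=\sin\innerprod{\underline m_{ij}}{\vy}$. First I would check that distinct edges carry distinct frequencies: since $\underline m_{ij}$ has exactly one $+1$ (at coordinate $j$) and one $-1$ (at coordinate $i$), the identity $\underline m_{ij}=\pm\underline m_{kl}$ forces $\braces{i,j}=\braces{k,l}$, and $\underline m_{ij}\neq\underline 0$ because $i\neq j$. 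Hence, over the edge set, the $2\abs{E}$ functions $\braces{\cos\innerprod{\underline m_{ij}}{\vy},\sin\innerprod{\underline m_{ij}}{\vy}}$ are $2\abs{E}$ \emph{distinct} members of the orthogonal system, all orthogonal to the constant function $1$. Taking the inner product of the displayed relation with $1$ then forces $c=0$, and the linear independence of a subset of an orthogonal family forces $\underline a=0$, establishing minimality.

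If one prefers an elementary, self-contained computation in place of quoting the orthogonality of the trigonometric system, the same conclusion follows by building the Gram matrix directly. Integrating the displayed identity over $\cY^p$ and using that each $\int\cos(y_j-y_i)\,d\vy=\int\sin(y_j-y_i)\,d\vy=0$ (each factors, via \eqref{eqn:trigIdentity:a}, into vanishing single-coordinate integrals) yields $c=0$ at once. Then multiplying the identity by $\cos(y_l-y_k)$ and by $\sin(y_l-y_k)$ for each fixed edge $(k,l)\in E$ and integrating, I would reduce every cross term through product-to-sum identities to an integral of a cosine or sine of an integer combination of the $y$'s; all of these vanish except the self-terms $\int\cos^2=\int\sin^2=\half$. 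This exhibits the Gram matrix as $\half$ times the identity, which is nonsingular and again forces $\underline a=0$.

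The step that demands the most care, and the main obstacle, is the bookkeeping around the odd sine statistics and edge orientation. Because $\sin(y_j-y_i)=-\sin(y_i-y_j)$, the argument presumes that each undirected edge is listed exactly once with a fixed orientation, so that $E$ never contains both $(i,j)$ and $(j,i)$; otherwise one would manufacture the spurious dependence $\sin(y_j-y_i)+\sin(y_i-y_j)=0$. I would also raise and then dismiss the natural worry about cycles: on a triangle the frequencies $\underline m_{ij},\underline m_{jk},\underline m_{ki}$ sum to $\underline 0$ as vectors, yet they remain pairwise distinct, so the corresponding trigonometric functions stay orthogonal and no linear dependence among the statistics is created. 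This confirms that minimality is a global property of the distribution on the torus and is not spoiled by the combinatorial structure of $E$.
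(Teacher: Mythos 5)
Your proof is correct, but it takes a genuinely different route from the paper's. The paper argues locally, one coordinate at a time: holding $\vy_{-1}$ fixed, it collapses all terms involving $y_1$ into a single sinusoid $A(y_{-1})\cos\parenth{y_1-\psi(y_{-1})}$ and argues that this cannot be constant in $y_1$ unless the amplitude vanishes, which in turn forces the coefficients on edges at node $1$ to vanish. Your argument is global and $L^2$-based: you identify each sufficient statistic with a distinct member $\cos\innerprod{\underline{m}_{ij}}{\vy}$, $\sin\innerprod{\underline{m}_{ij}}{\vy}$ (with $\underline{m}_{ij}=\underline{e}_j-\underline{e}_i$) of the orthogonal trigonometric system on the torus, check that distinct unordered edges carry distinct frequencies up to sign, and conclude from orthogonality --- equivalently from your explicit Gram-matrix computation giving $\tfrac{1}{2}$ times the identity under the normalized uniform measure --- that any affine relation forces $\underline{a}=0$ and $c=0$. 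What your approach buys: it handles all edges simultaneously rather than node by node; it replaces the paper's amplitude/phase bookkeeping (the $A_j$, $\psi_j$ formulas and the concluding ``cannot be zero for all $y_j$'' step, which is the least rigorous part of the paper's argument) with a clean diagonal Gram matrix; and the computation is a quantitative expression of the linear independence that minimality encodes, which resonates with the positive-definiteness of the sufficient-statistic covariance invoked later in the paper. Your orientation caveat is also a genuine hypothesis the paper leaves implicit: if $E$ contained both $(i,j)$ and $(j,i)$, then $\sin(y_j-y_i)+\sin(y_i-y_j)=0$ would be a nontrivial dependence, so each undirected edge must appear exactly once. What the paper's route buys in exchange is elementarity: it uses only pointwise trigonometric identities, with no appeal to integration or to orthogonality of the Fourier system on $\cY^p$.
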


\begin{proof}

First note that the full vector sufficient statistic for our model is
\begin{align}
    \underline{\phi}(y) = \begin{bmatrix}
        \underline{\phi_c}(y)\\
        \underline{\phi_s}(y)
    \end{bmatrix},
\end{align}
where for $(i,j) \in E$
\begin{align}
    \phi_{ij,c}(y) &= \cos(y_j-y_i)\\
    \phi_{ij,s}(y) &= \sin(y_j-y_i).
\end{align}
Thus, in accordance with \Cref{def:min-exp-fam}, we desire to show that the following quantity
\begin{align}
    \sum_{(i,j)\in E} a_{ij,c}\cos(y_j-y_i) + a_{ij,s}\sin(y_j-y_i), \label{eq:minimalip}
\end{align}
cannot equal a constant for all $\vy \in \mathcal{Y}^p$ for any possible $\underline{a}$ vector of coefficients for this model to be of a minimal exponential family.

Now suppose that there did exist an $\underline{a}$ for which \eqref{eq:minimalip} is constant. This means that a perturbation in any component of $\vy$ should not change the result of the inner product. Without loss of generality, let us only perturb $y_1$. We then focus on the portion of the sum that involves the edges involving node $1$. That is,
\begin{align*}
    \sum_{(i,j)\in E_1} a_{ij,c}\cos(y_j-y_i) + a_{ij,s}\sin(y_j-y_i) = \sum_{j = 2}^p a_{1j,c}\cos(y_j-y_1) + a_{1j,s}\sin(y_j-y_1)
\end{align*}
Since all of these terms are sinusoids that are a function of $y_1$ alone (since we hold all other components constant), then we can rewrite this sum as a single resultant sinusoid with some magnitude and phase:
\begin{align*}
    \sum_{j = 2}^p a_{1j,c}\cos(y_j-y_1) + a_{1j,s}\sin(y_j-y_1) = A(y_{-1})\cos(y_1 - \psi(y_{-1})),
\end{align*}
where the magnitude and phase shift depend on all components except $y_1$. Since this term is sinusoidal, unless the magnitude is always zero, it cannot be constant. Thus, we just need to show that $A(y_{-1})$ is not always zero. By basic trigonometry, we can show that
\begin{align*}
    \sum_{j = 2}^p a_{1j,c}\cos(y_j-y_1) + a_{1j,s}\sin(y_j-y_1) &= \sum_{j = 2}^p a_{1j,c}\cos(y_1-y_j) + a_{1j,s}\sin(y_1-y_j + \pi)\\
    &= \sum_{j = 2}^p A_j(y_j)\cos\parenth{y_1 + \psi_j(y_j)},
\end{align*}
where
\begin{align*}
    A_j(y_j) &=\sqrt{\parenth{a_{1j,c}\cos(y_j) + a_{1j,s}\sin(\pi - y_j)}^2 + \parenth{a_{1j,c}\sin(y_j)  -a_{1j,s}\cos(\pi - y_j)}^2}\\
    \psi_j(y_j) &= \arctan{\parenth{\frac{a_{1j,c}\sin(y_j)  -a_{1j,s}\cos(\pi - y_j)}{a_{1j,c}\cos(y_j) + a_{1j,s}\sin(\pi - y_j)}}}.
\end{align*}
Since each of the $A_j(y_j)$ depend on $y_j$ independently and cannot be zero for all $y_j$, as long as $\underline{a} \neq 0$, we conclude that $A(y_{-1})$ cannot be always zero if $\underline{a} \neq 0$. Thus \eqref{eq:jointdensity} is a minimal exponential family.
\end{proof}

\subsubsection{Univariate Conditional Distributions of the Model are von-Mises} \label{subsec:univariate_conditional}

Univariate conditionals of joint distributions are often very useful - aiding in the understanding the statistics of a single random variable and even in sampling from difficult distributions through Markov chain Monte Carlo methods. Here we show that the univariate conditionals of the model described in \eqref{eq:jointdensity} are simply von-Mises distributions. 

To understand the univariate conditionals, we first define the edge set of all edges associated with node $u$ to be 
$$E_u \triangleq \{(i,j) \in E \;\big|\; i = u \text{ or } j = u\}$$.  

With that, we have the following lemma:

\begin{lemma}\label{lemma:univariate:Conditional}
For any $u \in E$, the conditional distribution of $Y_u$ given all other nodes is a von Mises distribution.
\end{lemma}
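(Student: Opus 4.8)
The plan is to compute the conditional density directly as the ratio $f_{Y_u \mid \vY_{-u}}(y_u \mid \vy_{-u}) = f_{\vY}(\vy)/f_{\vY_{-u}}(\vy_{-u})$ and show that its dependence on $y_u$ collapses to a single sinusoid. First I would start from the product form of \eqref{eq:jointdensity} and observe that, once we condition on $\vy_{-u}$, every edge factor $\exp(\kappa_{ij}\cos(y_j - y_i - \mu_{ij}))$ with $(i,j) \notin E_u$ is constant in $y_u$ and therefore gets absorbed into the normalization. The marginal $f_{\vY_{-u}}(\vy_{-u})$ likewise depends only on $\vy_{-u}$. Hence only the edges in $E_u$ carry the $y_u$ dependence, and
\begin{align*}
    f_{Y_u \mid \vY_{-u}}(y_u \mid \vy_{-u}) \propto \exp\left(\sumijEu \kappa_{ij}\cos(y_j - y_i - \mu_{ij})\right),
\end{align*}
where the proportionality constant depends only on $\vy_{-u}$.

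Next I would rewrite the exponent as a pure sinusoid in $y_u$. For each edge in $E_u$, node $u$ appears either as the first or the second index; using that cosine is even, each such term can be written as $\kappa_{uv}\cos(y_u - c_v)$ for a constant $c_v$ determined by the fixed neighbor value $y_v$ and the offset $\mu_{uv}$. Since these are all sinusoids of the same frequency in $y_u$, phasor/trigonometric addition collapses the whole sum to a single resultant sinusoid
\begin{align*}
    \sumijEu \kappa_{ij}\cos(y_j - y_i - \mu_{ij}) = R_u(\vy_{-u})\cos\left(y_u - \phi_u(\vy_{-u})\right),
\end{align*}
with an effective concentration $R_u$ and mean direction $\phi_u$ that depend only on $\vy_{-u}$. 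This reduction is essentially identical to the one already carried out in the proof of \Cref{lemma:min-exp-fam}, where a weighted sum of $\cos$ and $\sin$ of pairwise differences involving a single node was rewritten as one amplitude--phase pair.

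Finally I would conclude by matching forms: the conditional density is proportional to $\exp(R_u\cos(y_u - \phi_u))$, which is precisely the von Mises density of \eqref{eqn:defn:vonMises} with concentration $\kappa = R_u(\vy_{-u})$ and mean $\mu = \phi_u(\vy_{-u})$. Integrating over $y_u \in \cY$ gives the normalizer $2\pi I_0(R_u)$, independent of the phase shift $\phi_u$ by periodicity, confirming the von Mises form exactly. The only real obstacle is the second step --- carefully tracking the orientation of each edge (whether $u$ is the first or second index) and reducing the weighted sum of cosines to a single concentration and mean direction --- but this is routine phasor arithmetic that mirrors the computation already established for \Cref{lemma:min-exp-fam}.
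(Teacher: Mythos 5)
Your proposal is correct and follows essentially the same route as the paper's proof: condition via Bayes' rule, discard the edge factors outside $E_u$ as constants in $y_u$, and collapse the remaining same-frequency sinusoids in $y_u$ into a single resultant $R_u\cos(y_u - \phi_u)$, which is matched to the von Mises form \eqref{eqn:defn:vonMises}. The only cosmetic difference is that the paper carries out the phasor addition explicitly with complex exponentials (taking $\Re$ of $e^{-iy_u}\sum_k \kappa_{uk}e^{i(y_k-\mu_{uk})}$) and starts from the natural parametrization \eqref{eq:naturaljointdensity} before converting back to $(\kappa,\mu)$, whereas you work directly in the $(\kappa,\mu)$ form.
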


\begin{proof}
By Baye's Rule, note that
\begin{align*}
    f_{Y_u \mid Y_{-u}}(y_u \given{\big} y_{-u}) \propto f_{\vY}(\vy).
\end{align*}
From this we can deduce the following,
\begin{align*}
    f_{Y_u \mid Y_{-u}}(y_u \given{\big} y_{-u}) &\propto \exp \parenth{\sum_{(i,j) \in E} \thetasijc \cos(y_j-y_i) +  \thetasijs \sin(y_j-y_i)}\\
    &=\exp \parenth{\sum_{(i,j) \in E_u} \thetasijc \cos(y_j-y_i) +  \thetasijs \sin(y_j-y_i)}\\
    &\;\;\;\times \exp \parenth{\sum_{(i,j) \in E_u^c} \thetasijc \cos(y_j-y_i) +  \thetasijs \sin(y_j-y_i)}\\
    &=C_{-u} \exp \parenth{\sum_{(i,j) \in E_u} \thetasijc \cos(y_j-y_i) +  \thetasijs \sin(y_j-y_i)}
\end{align*}
Letting $\kappa_{ij} = \sqrt{{\thetasijc}^2 + {\thetasijs}^2}$ and $\mu_{ij} = \arctan\left(\frac{\thetasijs}{\thetasijc}\right)$,
\begin{align}
    f_{Y_u \mid Y_{-u}}(y_u \given{\big} Y_{-u} = y_{-u}) &\propto C_{-u} \exp \parenth{\sum_{(l,k) \in E_u} \kappa_{lk} \cos(y_k-y_l-\mu_{lk})}\nonumber\\
    &=  C_{-u} \exp \parenth{\sum_{k\in V\backslash u} \kappa_{uk} \cos(y_k-y_u-\mu_{uk})} \label{eq:cos-sym}\\
    &= C_{-u} \exp \parenth{ \Re\left( 
     \sum_{k\in V\backslash u} \kappa_{uk} \exp\left(i(y_k-y_u-\mu_{uk})\right) \right)}\nonumber\\
    &= C_{-u} \exp \parenth{ \Re\left( \exp\left(-iy_u \right)
     \sum_{k \in V\backslash u} \kappa_{uk} \exp(i(y_k-\mu_{uk}))\right)} \nonumber\\
     &= C_{-u} \exp \parenth{ \Re\left( \exp\left(-iy_u \right)
     A\exp(i\theta_u)\right)} \nonumber\\
     &= C_{-u} \nonumber\exp\parenth{A\cos\parenth{\theta_u-y_u}}\\
     &=  C_{-u} \exp\parenth{A\cos\parenth{y_u-\theta_u}}, \nonumber
\end{align}
where \eqref{eq:cos-sym} follows from the symmetry of cosines, $A = \abs{\sum_{k \in V\backslash u} \kappa_{uk} \exp(i(y_k-\mu_{uk}))}$, and $\theta_u = \arctan\left(\sum_{k \in V\backslash u} \kappa_{uk} \exp(i(y_k-\mu_{uk}))\right)$. This is  the form of a von Mises density by virtue of \eqref{eqn:defn:vonMises}.
\end{proof}

\subsubsection{Relative Entropy and Mutual Information of a Pair of Oscillators}

Consider the wave model in the case of two oscillators, i.e., $p=2$. We compute the pairwise mutual information under this model and show that under this formulation, we have a closed-form solution. We first show the following results on the two-dimensional model and then proceed to compute the information measures.
\begin{lemma}
    The marginal distribution on the two-dimensional wave model, of the form in \Cref{eq:jointdensity}, is uniform. \label{lemma:marginal-2d}
\end{lemma}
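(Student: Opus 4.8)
The plan is to compute the marginal density of $Y_1$ (equivalently $Y_2$) directly by integrating out the other variable, and show the result is constant on $\cY = [-\pi,\pi)$. First I would write down the joint density for $p=2$ with a single edge $(1,2)$:
\[
f_{Y_1,Y_2}(y_1,y_2) = \frac{1}{Z}\exp\parenth{\kappa_{12}\cos(y_2-y_1-\mu_{12})}.
\]
The marginal of $Y_1$ is obtained by integrating over $y_2 \in \cY$:
\[
f_{Y_1}(y_1) = \frac{1}{Z}\int_{-\pi}^{\pi} \exp\parenth{\kappa_{12}\cos(y_2-y_1-\mu_{12})}\, dy_2.
\]

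The key step is the change of variables $z = y_2 - y_1 - \mu_{12}$. Because the integrand depends on $y_2$ only through the cosine of this shifted argument, and because the cosine and the integration domain $[-\pi,\pi)$ are both $2\pi$-periodic, the integral over a full period is invariant under the shift. That is, $\int_{-\pi}^{\pi}\exp(\kappa_{12}\cos(y_2-y_1-\mu_{12}))\,dy_2 = \int_{-\pi}^{\pi}\exp(\kappa_{12}\cos z)\,dz$, which by the definition of the modified Bessel function in \eqref{eqn:defn:vonMises} equals $2\pi I_0(\kappa_{12})$. Crucially, this value is independent of $y_1$, so $f_{Y_1}(y_1) = \frac{2\pi I_0(\kappa_{12})}{Z}$ is constant in $y_1$. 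Since a constant density on $[-\pi,\pi)$ that integrates to one must equal $\frac{1}{2\pi}$, we conclude $f_{Y_1}(y_1) = \frac{1}{2\pi}$, the uniform density; the identical argument gives the same for $Y_2$.

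The only subtlety worth flagging is the justification that integrating a $2\pi$-periodic function over any interval of length $2\pi$ gives the same value regardless of the shift $y_1+\mu_{12}$, so that after the substitution the limits can be returned to $[-\pi,\pi)$ without changing the integral. This is a standard periodicity fact rather than a genuine obstacle, so I expect the main ``work'' here to be simply invoking it cleanly; there is no hard estimate or inequality to establish. As a consistency check, one also recovers the value of the partition function $Z = 2\pi I_0(\kappa_{12}) \cdot 2\pi / (2\pi) $ up to the normalization, confirming the marginal integrates to one.
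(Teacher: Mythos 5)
Your proposal is correct and follows essentially the same route as the paper: integrate out $y_2$, recognize the inner integral as the von Mises normalization $2\pi I_0(\kappa_{12})$ (your change-of-variables/periodicity argument is just a more explicit justification of the same step), note that this is independent of $y_1$, and conclude the constant marginal must equal $\frac{1}{2\pi}$. The only blemish is the throwaway consistency check at the end, where the partition function should come out to $Z = (2\pi)^2 I_0(\kappa_{12})$; this aside plays no role in the argument.
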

\begin{proof}
    
Suppose we have a model of the form in \eqref{eq:jointdensity}, with $p=2$. To find the marginal, on say $Y_1$, we simply integrate out $Y_2$, where for any $y_1 \in [-\pi,\pi)$:
\begin{align}
    f_{Y_1}(y_1) &= \int_{y_2 = -\pi}^\pi f_{Y_1,Y_2}(y_1,y_2) dy_2\nonumber\\
    &=  \int_{y_2 = -\pi}^\pi \frac{1}{
    Z(\kappa_{12},\mu_{12})} \exp\parenth{\kappa_{12}\cos\parenth{y_2-y_1-\mu_{12}} }dy_2\nonumber\\
    &= \frac{2\pi I_0(\kappa_{12})}{Z(\kappa_{12},\mu_{12})} \label{eq:p2_marginal} \\
    &=\frac{1}{2\pi}, \label{eq:p2_marginal:a} 
\end{align}
where \eqref{eq:p2_marginal} results from the integral of the density of a univariate von-Mises (since $y_1$ is held constant); and
\eqref{eq:p2_marginal:a} follows since the density in \eqref{eq:p2_marginal}is a constant over $[-\pi,\pi)$. 
\end{proof}

\begin{corollary}
    The univariate conditional of the wave model in two-dimensions is von-Mises.
\end{corollary}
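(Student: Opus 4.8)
The plan is to prove this in one of two equivalent ways, and I would present the self-contained computation while flagging the shortcut. The shortcut is to observe that the corollary is simply the $p=2$ instance of \Cref{lemma:univariate:Conditional}: when there are only two oscillators, ``all other nodes'' reduces to the single remaining node, so the already-established fact that $f_{Y_u \mid Y_{-u}}$ is von Mises applies verbatim. Since this requires no new work, the real content is just to pin down the concentration and mean parameters explicitly in the two-oscillator setting, which is where the uniform-marginal result of \Cref{lemma:marginal-2d} becomes convenient.

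For the direct route, I would start from Bayes' rule, writing
\begin{align*}
    f_{Y_1 \mid Y_2}(y_1 \mid y_2) = \frac{f_{Y_1,Y_2}(y_1,y_2)}{f_{Y_2}(y_2)}.
\end{align*}
By \Cref{lemma:marginal-2d} the denominator equals the uniform density $\tfrac{1}{2\pi}$, and the numerator is the joint from \eqref{eq:jointdensity}. Combining these and substituting the value of the partition function extracted from the marginal computation, namely $Z(\kappa_{12},\mu_{12}) = 4\pi^2 I_0(\kappa_{12})$ (which follows from equating \eqref{eq:p2_marginal} with $\tfrac{1}{2\pi}$), the conditional collapses to
\begin{align*}
    f_{Y_1 \mid Y_2}(y_1 \mid y_2) = \frac{1}{2\pi I_0(\kappa_{12})} \exp\parenth{\kappa_{12}\cos(y_2 - y_1 - \mu_{12})}.
\end{align*}

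Finally, I would invoke the symmetry of the cosine, $\cos(y_2 - y_1 - \mu_{12}) = \cos\parenth{y_1 - (y_2 - \mu_{12})}$, to rewrite the exponent in the canonical form of \eqref{eqn:defn:vonMises}. This identifies the conditional as a von Mises density with concentration $\kappa = \kappa_{12}$ and mean direction $\mu = y_2 - \mu_{12}$. There is essentially no obstacle here beyond careful bookkeeping of the normalization constant: the only point needing attention is confirming that the $Z$ recovered from the marginal calculation cancels exactly against the uniform marginal to leave precisely the von Mises normalizer $2\pi I_0(\kappa_{12})$, which it does.
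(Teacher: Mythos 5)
Your direct computation is essentially the paper's own proof: Bayes' rule, the uniform marginal from \Cref{lemma:marginal-2d}, and cancellation of the partition function to leave the von Mises normalizer $2\pi I_0(\kappa_{12})$ --- the paper does the same bookkeeping by dividing by the marginal in the form \eqref{eq:p2_marginal} rather than substituting $Z(\kappa_{12},\mu_{12}) = 4\pi^2 I_0(\kappa_{12})$ explicitly, and it conditions $Y_2$ on $Y_1$ rather than $Y_1$ on $Y_2$, but these are cosmetic differences. Your flagged shortcut (specializing \Cref{lemma:univariate:Conditional} to $p=2$) is also a valid route the paper does not take, and your final cosine-symmetry step identifying the mean direction $y_2-\mu_{12}$ is a correct, slightly more explicit finish than the paper's.
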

\begin{proof}
    By \Cref{lemma:marginal-2d}, we know that the marginal of such a two-dimensional distribution is uniform. Thus we have
    \begin{align}
        f_{Y_2\mid Y_1}(y_2\mid y_1) &= \frac{f_{Y_2, Y_1}(y_2, y_1) }{f_{Y_1}( y_1) } \nonumber\\
        &= \frac{1}{
    Z(\kappa_{12},\mu_{12})} \exp\parenth{\kappa_{12}\cos\parenth{y_2-y_1-\mu_{12}} } \times \parenth{\frac{2\pi I_0(\kappa_{12})}{Z(\kappa_{12},\mu_{12})} }^{-1} \label{line:2d-product} \\
    &= \frac{1}{2\pi I_0(\kappa_{12})}\exp\parenth{\kappa_{12}\cos\parenth{y_2-y_1-\mu_{12}} }, \label{eq:conditional-2d}
    \end{align}
    where \eqref{line:2d-product} follows from \eqref{eq:p2_marginal}.    
\end{proof}

Given the above construction, we consider the description of $I(X;Y)$ given by \eqref{eqn:defn:MutualInformation:a} where $P_Y$ will be the uniform distribution on $[-\pi,\pi)$.
Thus we have that for any $x$, the relative entropy is given by
\begin{eqnarray}
D(P_{Y|X=x}\|P_Y) &=& \int_{y=-\pi}^\pi 
f_{Y|X}(y|x) \log \frac{f_{Y|X}(y|x)}{f_Y(y)}dy\\
                  &=& \log(2 \pi) - h(P_{Y|X=x}) \\
                  &=& \log(2 \pi) + \kappa \frac{I_1(\kappa)}{I_0(\kappa)}
                  - \log\left(2 \pi I_0(\kappa)\right)  \nonumber\\
                  &=& \kappa \frac{I_1(\kappa)}{I_0(\kappa)} -\log I_0(\kappa) \label{eq:pairKL}.
\end{eqnarray}
where $h(\cdot)$ is the Shannon entropy, which for a von Mises distribution with parameter $\kappa$ has a closed form expression as $\log\left(2 \pi I_0(\kappa)\right) -\kappa \frac{I_1(\kappa)}{I_0(\kappa)}
                  $.  As such, the mutual information is given by
\begin{align}
I(X;Y) =\kappa \frac{I_1(\kappa)}{I_0(\kappa)} -\log I_0(\kappa). \label{eq:pairMI}
\end{align}

\section{Methods for Estimating Optimal Tree Approximations} \label{sec:treeApproximation}

\newcommand{\tree}{\mathcal{T}}
\newcommand{\cTrees}{\mathsf{T}}

\subsection{The Chow-Liu Method for Approximating the Joint Distribution}
The main goal of this work is to infer from data and extract the parameters of our random field model, as proposed in \eqref{eq:jointdensity}. However, learning a fully flexible joint distribution, is, in general, a hard problem. Therefore, we first introduce an information-theoretic approach to approximating the joint distribution with a tree structure, by using pairwise statistics. This approach was first introduced by its namesake, Chow and Liu, in 1968 for discrete alphabets and has been widely influential in applied probability and information theory \cite{Chow_Liu_1968}.

We begin by describing the dependence tree. Suppose we have a graph $G = (V,E)$ with four vertices, $V = \{1,2,3,4\}$ and some edge set, $E$. We say that the graph $G$ is a tree if $G$ is: 1) acyclic, meaning there are no cycles in the graph, and 2) connected, meaning that there are no disjoint subgraphs. Therefore, one valid edge set that would make this graph a tree would be $E = \{(1,2), (1,3), (3,4)\}$. This is demonstrated below in \Cref{fig:dependenceTree}.
\begin{figure}[htp]
    \centering    \includegraphics[width=\linewidth]{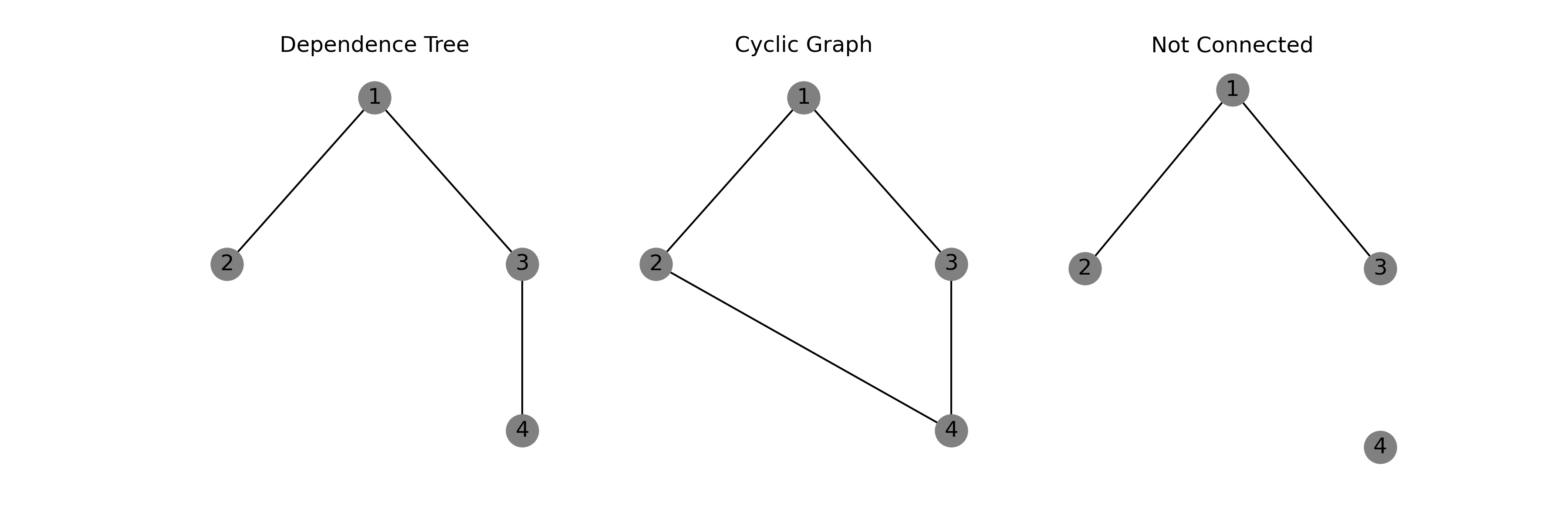}
    \caption{Example graphs depicting a dependence tree (left) and two graphs that are not dependence trees (middle, right), each violating a difference property. The middle graph violates the acyclic property of trees and the right graph violates the connected property of trees.}
    \label{fig:dependenceTree}
\end{figure}
To be a dependence tree, however, we must also encode information about the probability distribution. We do this using the following factorization of the joint distribution
\begin{align}
    f^\tree_{\vY}(\vy) = \prod_{i = 1}^d f_{Y_i|Y_{j(i)}}(y_i|y_{j(i)}), \label{eqn:defn:treedistribution}
\end{align}
where for tree $\tree$, $j(i)$ is the parent of node $i$ and if node $i$ has no parent (i.e., the root node) we use its marginal distribution. The tree structure and "parent-child" relationship of nodes falls out once one picks some node to be the root node (any node will work). Using the leftmost graph in \Cref{fig:dependenceTree}, if we choose the root to be node 1, then the distribution would factor as
\begin{align*}
    f^\tree_{\vY}(\vy) = f_{Y_1}(y_1)f_{Y_2|Y_1}(y_2|y_1)f_{Y_3|Y_1}(y_3|y_1)f_{Y_4|Y_3}(y_4|y_3).
\end{align*}

\subsubsection{The Chow-Liu Dependence Tree}

Let $\vY = [Y_1, Y_2, \cdots, Y_p]$ be a random vector in $[-\pi,\pi)^d$. That is, each random variable $Y_i$ has circular support. In general, the joint distribution can have any form, but here we limit ourselves to approximations of the maximum likelihood estimate from the class of tree distributions. We will state the main result of the Chow-Liu algorithm here and then derive necessary component parts specific to our method later. 

\begin{definition}
Let $\cTrees$ be the set of all trees with $p$ nodes.  For any $\tree \in \cTrees$, let $P^\tree$ be the tree distribution with density given by
\eqref{eqn:defn:treedistribution} and $P$ be the true generative distribution of the data, with density \eqref{eq:jointdensity} for some set of parameters $\underline{\theta}^*$. Let $I(Y_i;Y_j)$ be the pairwise mutual information between any two of these random variables under $P$. Then the solution to the minimization problem
\begin{equation}
    \hat{T} = \arg\min_{\tree \in \mathcal{T}} D(P\| P^\tree)
\end{equation}
is the distribution given by the maximum spanning tree, where edge values between $Y_i$ and $Y_j$ are given by the pairwise mutual information \cite{Chow_Liu_1968}. That is, 
\begin{equation}
    \hat{T} = \arg\max_{\tree \in \mathcal{T}} \sum_{i = 1}^n I(Y_i;Y_{j(i)}),
\end{equation}
where $I(Y_i;Y_{j(i)})$ represents the edge weight between $Y_i$ and some $Y_j$ connected to $Y_i$ with respect to tree $\tree$. This is known as the Chow-Liu dependence tree.
\end{definition}

Constructing the Chow-Liu dependence tree, thus, reduces the problem of finding a maximum likelihood estimate in the set of dependence trees to a maximization over pairwise mutual informations. 

\subsubsection{Chow-Liu Approximation of the Graphical Model}

In this section, we describe a natural way to parameterize the Chow-Liu dependence tree for the graphical model in \eqref{eq:jointdensity}. Given that the Chow-Liu procedure needs pairwise mutual informations, we need some way to compute bivariate distributions on pairs of random variables. Additionally, we would like for this formulation to be faithful to our proposed original distribution. To do this we describe the bivariate distribution using the law of conditional probability, 
\begin{align*}
    f_{Y_1,Y_2}(y_1,y_2) = f_{Y_1}(y_1)f_{Y_2|Y_1}(y_2|y_1),
\end{align*}
and give models for the marginal and conditional distributions shown. We assume that the marginal statistics of an oscillation are given by a uniform distribution and the conditionals are given as follows:

\begin{equation}
    f_{Y_i|Y_j}(y_i|y_j) = \frac{\exp{(\tilde{\kappa}_{ij}\cos(y_i-y_j-\tilde{\mu}_{ij})})}{2\pi I_0(\tilde{\kappa}_{ij})}, \label{eq:chowliucond}
\end{equation}

where, $\tilde{\kappa}_{ij}$ and $\tilde{\mu}_{ij}$ are the pairwise concentration and mean parameters, not to be confused with $\kappa_{ij} $ and $\mu_{ij}$ from the graphical model in \eqref{eq:jointdensity}. Given this formulation, we can construct the dependence tree by fitting univariate von Mises distributions on all $y_j-y_i$ to get all $\tilde{\kappa}_{ij}$, compute all $I(Y_i;Y_j)$ using the $\tilde{\kappa}_{ij}$'s and \eqref{eq:pairMI}, and then use a maximum spanning tree algorithm to obtain the dependence tree. The approximated joint density can then be computed as
\begin{align}
    \hat{f}_{\vY}(\vy) = f^{\hat{\tree}}_{\vY}(\vy) \prod_{(i,j)\in \hat{T} }^n f_{Y_{j}|Y_i}(y_{j}|y_i). \label{eq:chowliujoint}
\end{align}

\section{Interaction Screening for Inference of the Full Graphical Model} \label{sec:ISO}

In the previous section, we described a general procedure to efficiently infer a tree distribution on the data. However, it may not be the case that the data were generated from the class of tree distributions. Thus, we consider the more general problem of inference over all possible graphs. In particular, we are interested in two problems: 1) sparse structure recovery and 2) sample-efficient parameter estimation on the inferred structure of the graph proposed in \eqref{eq:jointdensity}.  Previous efforts proposed the same graphical model we propose in \eqref{eq:jointdensity}, but the proposed estimation procedure using a score function methodology with $\ell_2$ regularization  was neither assessed from a structure learning nor sample complexity perspective \cite{cadieu2010phase}.

Recent work has shown that parameters for the Ising model, which models pairwise interactions between Rademacher random variables, can be solved for efficiently by convex estimation using the interaction screening objective (ISO) with guarantees on sample complexity \cite{vuffray_ising}. They later extended this work to include a general framework for discrete random variables. Work by Shah et al. has shown information theoretically good bounds on a similar problem for continuous random variables of the exponential family, however, the pairwise interaction terms are restricted to those that can be represented as a Kronecker product of functions on each individual term \cite{Shah_Shah_Wornell_2021}. Here we extend the interaction screening framework used originally for Ising models to continuous circular random variables. We show for the specific case of von-Mises random fields, whose pairwise potentials are not of the form in Shah et al, that we can solve the $\ell_1$ regularized structure learning problem efficiently \cite{Shah_Shah_Wornell_2021}.

We define the ISO for each node $u \in \{1,\cdots,p\}$ with $n$ i.i.d. samples as follows:

\beqa
\ISO &=& \frac{1}{n} \sum_{k=1}^n \exp \parenth{-\sum_{(i,j) \in E_u} \thetaijc \cos(\yk_j-\yk_i) +\thetaijs \sin(\yk_j-\yk_i)}, \label{eq:ISO}
\eeqa
where $E_u$ is the edge set constrained to only edges involving node $u$ as defined above. 
Compared to the maximum likelihood estimation (MLE) problem in \eqref{eq:MLE}
which contains the partition function, but solves for all parameters simultaneously, the ISO solves for parameters associated only with node $u$, but avoids having to deal with the computational complexity of the partition function. In particular, we can now solve the following problem:
\begin{align}
    \hat{\utheta}_u = \arg\min_{\uthetau} \;\; \ISO
    \label{eqn:ISO-problem}
\end{align}
for each node. For the structure learning problem, we can solve the $\ell_1$ regularized version of the problem as follows:
\newcommand{\ukappa}{\underline{\kappa}}
\begin{align}
    \hat{\utheta}_u[\lambda] = \arg\min_{\uthetau} \;\; \ISO+ \lambda \norm{\ukappa_u}_1, \label{eq:sparse-iso-problem}
\end{align}
where $\kappa_u$ is the vector of coupling parameters involving node $u$ before reparametrization: for any edge $(i,j) \in E_u$, $\kappa_{i,j} = \sqrt{\utheta_{i,j,c}^2+\utheta_{i,j,s}^2}$. We penalize the $\ukappa$'s rather than the $\theta$'s because in the structure learning problem we aim to look for sparsity in the parameters that correspond to conditional independence. By Lemma \eqref{lemma:condindep}, sparsity on the $\ukappa$'s achieves the desired behavior.

\subsection{Conditions for Controlling the Error in Parameter Estimates}

We follow the proof technique of the original interaction screening approach \cite{vuffray_ising}, which leverages general conditions for convergence of high dimensional M-estimators \cite{Negahban_Ravikumar_Wainwright_Yu_2012}. We first state the two conditions  and then the proposed theorem that quantitatively shows the utility of this method.

\begin{condition}
    Let $\lambda$ be the penalty parameter for an $\ell_1$ regularized M-estimator described by \eqref{eq:sparse-iso-problem}. $\lambda$ is greater than twice the magnitude of the largest partial derivative of the ISO:
    \begin{eqnarray*}
        \lambda \geq 2\norm{\nabla S_n}_\infty.
    \end{eqnarray*} \label{cond:1}
\end{condition}

As noted in Vuffray et al, this condition allows for control of the estimation error between the regularized solution and the true parameters to lie in a set
\begin{eqnarray}
K := \{\Delta \in \reals^{p-1} \mid \norm{\Delta}_1 \leq 4\sqrt{d} \norm{\Delta}_2 \} \label{eqn:defn:setK}
\end{eqnarray} 
where $\Delta$ is a deviation from the minimizer of the ISO and $d$ is the maximum number of non-zero entries in the true $\utheta^*$ \cite{vuffray_ising}.

\begin{condition}
    The ISO objective in
    \eqref{eqn:ISO-problem} is restricted strongly convex with respect to $K$ as defined in \eqref{eqn:defn:setK} on a ball of radius $R$. \label{cond:2}
\end{condition}

This condition allows for control of the minimizer of the ISO objective to lie close to the true parameter vector with at most $d$ non-zero entries.

\begin{proposition}
    If \Cref{cond:1} and \Cref{cond:2} are satisfied for some $\ell_1$ regularized $M$-estimator, and we have that $R > 3 \sqrt{d}\frac{\lambda}{\gamma}$, with $d$ being the maximum degree of a node, then the norm error in that regularized estimator is controlled as
    \begin{eqnarray}
        \norm{\hat{\utheta} - \utheta^*}_2 \leq 3 \sqrt{d}\frac{\lambda}{\gamma},
    \end{eqnarray}
    where $\utheta^*$ is the true parameter vector for the model from which the data are generated and $\hat{\utheta}$ is the $\ell_1$ regularized M-estimator in \eqref{eq:sparse-iso-problem}.
\end{proposition}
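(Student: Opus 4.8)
The plan is to follow the now-standard primal analysis of regularized $M$-estimators due to Negahban et al., specializing the general machinery to the interaction screening objective. Write $\utheta^*$ for the true parameter, $\hat{\utheta}$ for the minimizer of the regularized problem \eqref{eq:sparse-iso-problem}, and let $\Delta = \hat{\utheta} - \utheta^*$ denote the error vector. The starting point is the \emph{basic inequality} obtained from optimality of $\hat{\utheta}$: since $\hat{\utheta}$ minimizes $S_n(\uthetau) + \lambda\norm{\ukappa_u}_1$, evaluating the regularized objective at $\hat{\utheta}$ and at $\utheta^*$ and rearranging yields
\begin{align*}
    S_n(\utheta^* + \Delta) - S_n(\utheta^*) \leq \lambda\parenth{\norm{\ukappa^*}_1 - \norm{\hat{\ukappa}}_1}.
\end{align*}
Substituting the Taylor-error decomposition $S_n(\utheta^* + \Delta) - S_n(\utheta^*) = \delta S_n(\Delta) + \innerprod{\nabla S_n(\utheta^*)}{\Delta}$ and bounding the linear term by H\"older's inequality, $\abs{\innerprod{\nabla S_n(\utheta^*)}{\Delta}} \leq \norm{\nabla S_n}_\infty \norm{\Delta}_1 \leq \tfrac{\lambda}{2}\norm{\Delta}_1$ via \Cref{cond:1}, I obtain an upper bound on $\delta S_n(\Delta)$ in terms of $\lambda$, $\norm{\Delta}_1$, and the difference of penalty terms.

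Next I would establish that $\Delta$ lies in the cone $K$ of \eqref{eqn:defn:setK}. Here it is important that $\norm{\ukappa_u}_1 = \sum_{(i,j)\in E_u}\sqrt{\theta_{ij,c}^2+\theta_{ij,s}^2}$ is a block ($\ell_1/\ell_2$, group-lasso) norm with groups indexed by edges, hence decomposable. Writing $S$ for the set of active edges of $\utheta^*$ (at most $d$ of them) and $S^c$ for its complement, the reverse triangle inequality for the block norm gives $\norm{\ukappa^*}_1 - \norm{\hat{\ukappa}}_1 \leq \norm{\Delta_S} - \norm{\Delta_{S^c}}$, where these denote the block norm of $\Delta$ restricted to the respective groups. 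Combining with the nonnegativity $\delta S_n(\Delta)\geq 0$, which holds since $S_n$ is convex, forces $\norm{\Delta_{S^c}} \leq 3\norm{\Delta_S}$, whence $\norm{\Delta}_1 \leq 4\norm{\Delta_S} \leq 4\sqrt{d}\,\norm{\Delta}_2$ by Cauchy--Schwarz over the at-most-$d$ active groups. This is exactly membership in $K$.

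With $\Delta \in K$ in hand, I would invoke restricted strong convexity (\Cref{cond:2}): provided $\norm{\Delta}_2 \leq R$, we have $\delta S_n(\Delta) \geq \gamma \norm{\Delta}_2^2$. Chaining the RSC lower bound against the upper bound from the basic inequality, and substituting $\norm{\Delta}_1 \leq 4\norm{\Delta_S}$ and $\norm{\Delta_S} \leq \sqrt{d}\,\norm{\Delta}_2$, the estimate collapses to $\gamma\norm{\Delta}_2^2 \leq 3\lambda\sqrt{d}\,\norm{\Delta}_2$; dividing through by $\gamma\norm{\Delta}_2$ gives the claimed bound $\norm{\Delta}_2 \leq 3\sqrt{d}\,\tfrac{\lambda}{\gamma}$.

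The main obstacle is the radius restriction: RSC is only assumed on the ball $\norm{\Delta}_2 \leq R$, so the final inequality is licensed only if $\Delta$ itself lies in that ball, which is precisely where the hypothesis $R > 3\sqrt{d}\,\tfrac{\lambda}{\gamma}$ enters. I would handle it by the standard rescaling argument: supposing $\norm{\Delta}_2 > R$, replace $\Delta$ by $t\Delta$ with $t = R/\norm{\Delta}_2 \in (0,1)$, use convexity of the regularized objective to show the rescaled error still satisfies the basic inequality and the cone membership, run the argument above to conclude $t\norm{\Delta}_2 \leq 3\sqrt{d}\,\tfrac{\lambda}{\gamma} < R$, and observe this contradicts $t\norm{\Delta}_2 = R$. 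Hence $\norm{\Delta}_2 \leq R$, the direct argument applies, and the proof is complete.
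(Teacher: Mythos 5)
The paper does not actually prove this proposition---it is invoked as the standard convergence guarantee for $\ell_1$-regularized M-estimators from the framework of Negahban et al., exactly as used in Vuffray et al.'s interaction-screening analysis---and your proposal correctly reconstructs that standard argument in all its essential steps: the basic inequality from optimality of $\hat{\utheta}$, cone membership of the error via decomposability of the penalty together with \Cref{cond:1}, the restricted strong convexity bound of \Cref{cond:2}, and the convexity/rescaling argument that uses the hypothesis $R > 3\sqrt{d}\,\lambda/\gamma$ to justify staying inside the RSC ball. One caveat, inherited from the paper's own statement of the conditions (which are copied from the scalar-$\ell_1$ Ising setting): since the penalty $\norm{\underline{\kappa}_u}_1$ here is a group ($\ell_1/\ell_2$) norm, your H\"older and cone steps should be run consistently in that block norm and its dual (the maximum per-edge Euclidean norm of the gradient) rather than in $\norm{\cdot}_1$ and $\norm{\cdot}_\infty$; otherwise the constants pick up factors of $\sqrt{2}$, the cone constant $3$ and the bound $\norm{\Delta}_1 \leq 4\sqrt{d}\,\norm{\Delta}_2$ no longer come out exactly, and the error need not land precisely in the set $K$ of \eqref{eqn:defn:setK} on which \Cref{cond:2} is assumed.
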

We also let
\begin{align}
    \min_{(i,j)\in E} \kappa_{ij} = \alpha,\quad
    \max_{(i,j)\in E} \kappa_{ij} = \beta. \label{eqn:defn:alpha:beta}
\end{align}
This will become useful in doing the analysis for these conditions on our model and our theorems. We continue with the rest of this section by following an analogous proof strategy to  the Ising model ISO approach \cite{vuffray_ising}. We introduce generalizations to the circular domain to bound moments of the gradient of the ISO and take advantage of the properties of minimal exponential families to show strong convexity. 

\subsection{Controlling the Gradient of the ISO}
To acheive condition 1, we need to control the gradient of \eqref{eq:ISO}.

For the sake of analysis, let
\begin{align}
    X_{ij,c}(\uthetau) &= -\cos(y_j-y_i) \exp \parenth{-\sum_{(l,m) \in E_u} \theta_{lm,c} \cos(y_m-y_l) +\theta_{lm,s} \sin(y_m-y_l)} \label{eqn:defn:Xijc} \\
    X_{ij,s}(\uthetau) &= -\sin(y_j-y_i) \exp \parenth{-\sum_{(l,m) \in E_u} \theta_{lm,c} \cos(y_m-y_l) +\theta_{lm,s} \sin(y_m-y_l)} \label{eqn:defn:Xijs}.
\end{align}
Thus the gradient, for any $(i,j) \in E_u$, is given by as
\begin{align}
\begin{matrix}
    \frac{\partial}{\partial \thetaijc }\ISO = \frac{1}{n}\sum_{k=1}^n \Xkijc(\utheta_u), &
    \frac{\partial}{\partial \thetaijs }\ISO = \frac{1}{n}\sum_{k=1}^n \Xkijs(\utheta_u)
\end{matrix} \label{eq:iso_gradient}
\end{align}
for the sine and cosine component respectively. We next aim to use a concentration inequality to bound 
$\norm{\nabla S_n(\uthetau^*)}_\infty$,
as described in \eqref{eq:iso_gradient}, to bound its behavior with high probability. To do this, we will use Bernstein's inequality, given in \Cref{defn:bernstein}, on each component (showing only the cosine component as example) as follows:
\begin{align*}
    \mathbb{P}\parenth{\abs{\frac{\partial}{\partial \theta_{ij,c}} S_n(\uthetau^*)} \geq t} \leq 2\exp\parenth{-\frac{\frac{1}{2}t^2n^2}{\sum_{k=1}^n \E\brackets{\parenth{\Xkijc}^2} + \frac{1}{3}M tn }}.\\
\end{align*}

We will next use the union bound to control $\norm{\nabla S_n(\uthetau^*)}_\infty$, by first characterizing the following equalities and inequalities for the $\sin$ and $\cos$ components:
\begin{enumerate}
    \item $\E\brackets{X_{ij,c}(\utheta_u^*)} = \E\brackets{X_{ij,s}(\utheta_u^*)} = 0$
    \item $\E\brackets{X_{ij,c}(\utheta_u^*)^2}+ \E\brackets{X_{ij,s}(\utheta_u^*)^2} = 1$
    \item 
    $\max\braces{\abs{X_{ij,c}(\utheta_u^*)},\abs{X_{ij,s}(\utheta_u^*)}}\leq M$ for quantity $M>0$ that we determine below
\end{enumerate}

Each of the above is shown in three lemmas below. First we provide some notation that will aid in making the proofs succinct.
\newcommand{\thetalmstar}{{\underline{\theta}_{l,m}^*}}
\newcommand{\tlmstar}{\underline{t}_{l,m}}

\newcommand{\sumlminSetinnerprod}[1]
{\sum_{(l,m) \in #1} \innerprod{\thetalmstar}{\tlmstar} }

\begin{definition}
    For any $(l,m) \in E$, define 
    \beqa
    \thetalmstar &\triangleq& \brackets{\theta^*_{lm,c},\theta^*_{lm,s}}^T \label{eqn:defn:thetalmstar}\\
    \tlmstar &\triangleq& \brackets{\cos(y_m-y_l),\sin(y_m-y_l)}^T \label{eqn:defn:tlmstar}
    \eeqa
    and note as a consequence that
    \[ \innerprod{\thetalmstar}{\tlmstar} = \theta^*_{lm,c} \cos(y_m-y_l) + \theta^*_{lm,s} \sin(y_m-y_l)\]
Moreover, from \eqref{eqn:defn:Xijc} and \eqref{eqn:defn:Xijs} it follows that
\begin{subequations}
\beqa
 X_{ij,c}(\uthetau^*) &=&-\cos(y_j-y_i)
 \exp \parenth{-
 \sumlminSetinnerprod{E_u}} \\
 X_{ij,s}(\uthetau^*) &=& -\sin(y_j-y_i)
 \exp  \parenth{-
 \sumlminSetinnerprod{E_u}} 
\eeqa \label{eqn:defn:Xij:innerprod}
\end{subequations}
and from \eqref{eq:naturaljointdensity} it follows that
\beqa
f_{\vY}(\vy; \utheta^*) &=& 
\frac{1}{Z(\utheta^*)} \exp   \parenth{
 \sumlminSetinnerprod{E}}
\label{eqn:defn:jointdensity:innerprod}
\eeqa    
\end{definition}

\begin{lemma}
$\E\brackets{\Xijc(\uthetau^*)} = \E\brackets{\Xijs(\uthetau^*)} =  0$ \label{lemma:firstmoment}
\end{lemma}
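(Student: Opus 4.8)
The plan is to evaluate the expectation directly as an integral against the joint density $f_{\vY}(\vy;\utheta^*)$ given in \eqref{eqn:defn:jointdensity:innerprod}, and to exploit the fact that the exponential weight built into $\Xijc$ and $\Xijs$ is designed precisely to cancel the node-$u$ edge potentials that appear in the Gibbs weight. Using the expressions \eqref{eqn:defn:Xij:innerprod} and \eqref{eqn:defn:jointdensity:innerprod}, I would first write
\[
\E\brackets{\Xijc(\uthetau^*)} = \frac{1}{Z(\utheta^*)}\int_{\cY^p} -\cos(y_j-y_i)\exp\parenth{-\sumlminSetinnerprod{E_u}}\exp\parenth{\sumlminSetinnerprod{E}}\,d\vy.
\]

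The two exponentials then combine so that the $E_u$ potentials cancel exactly, leaving only the edges that avoid node $u$:
\[
\E\brackets{\Xijc(\uthetau^*)} = \frac{1}{Z(\utheta^*)}\int_{\cY^p} -\cos(y_j-y_i)\exp\parenth{\sumlminSetinnerprod{E_u^c}}\,d\vy.
\]
The crucial structural observation is that no edge in $E_u^c$ touches node $u$, so the surviving exponential factor does not depend on $y_u$ at all.

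Next I would integrate over $y_u$ first by Fubini. Since $(i,j)\in E_u$, exactly one of $i,j$ equals $u$; taking $i=u$ (the other case being symmetric), the $y_u$-independent exponential can be pulled outside the inner integral, and the whole expression reduces to a constant multiple of
\[
\int_{-\pi}^{\pi} \cos(y_j-y_u)\,dy_u = 0,
\]
which vanishes because a sinusoid integrates to zero over a full period $[-\pi,\pi)$. Hence $\E\brackets{\Xijc(\uthetau^*)} = 0$, and the identical argument with $\sin$ replacing $\cos$ yields $\E\brackets{\Xijs(\uthetau^*)} = 0$.

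The computation is essentially routine, and no concentration or exponential-family machinery is needed here. The one point requiring care is the cancellation step: verifying that the exponential tilt in $\Xijc$ and $\Xijs$ removes exactly the edge set $E_u$ from the joint weight, so that the remaining integrand factorizes as a $y_u$-free function (the $E_u^c$ exponential) times a pure sinusoid in $y_u$. Once that factorization is established, the vanishing is immediate from the periodicity of the sinusoid over the circle.
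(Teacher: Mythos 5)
Your proof is correct and follows essentially the same route as the paper's: cancel the node-$u$ edge potentials in the exponential tilt against the Gibbs weight, observe that the surviving $E \setminus E_u$ factor is free of $y_u$, and then integrate out $y_u$ to get zero. The only cosmetic difference is that you integrate the shifted sinusoid $\cos(y_j - y_u)$ directly over a full period, whereas the paper first expands $\cos(y_j-y_i)$ via the angle-difference identity and kills the $\int \cos(y_u)\,dy_u$ and $\int \sin(y_u)\,dy_u$ terms separately — both arguments rest on the same periodicity fact.
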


\begin{proof}
Since $(i,j) \in E_u$, assume without loss of generality that $j=u$. 
\begin{align}
\E\brackets{\Xijc(\utheta_u^*)} &= - \int_{\vy \in \cY^p} f_{\vY}(\vy) \cos(y_j-y_i) \exp \parenth{-\sumlminSetinnerprod{E_u}} d\vy\nonumber \\
&= - \int_{\vy \in \cY^p} f_{\vY}(\vy) \cos(y_j)\cos(y_i) \exp \parenth{-\sumlminSetinnerprod{E_u}} d\vy \nonumber \\
&\hspace{.35cm}- \int_{\vy \in \cY^p} f_{\vY}(\vy) \sin(y_j)\sin(y_i) \exp \parenth{-\sumlminSetinnerprod{E_u}} d\vy \label{eqn:proof:lemma:firstmoment:a}\\
&= - \frac{1}{Z} \int_{\vy \in \cY^p}  
\cos(y_u)\cos(y_i) \exp \parenth{\sumlminSetinnerprod{E}} \exp \parenth{-\sumlminSetinnerprod{E_u}} d\vy \nonumber \\
&\hspace{.35cm}
- \frac{1}{Z} \int_{\vy \in \cY^p}  
\sin(y_u)\sin(y_i) \exp \parenth{\sumlminSetinnerprod{E}} \exp \parenth{-\sumlminSetinnerprod{E_u}} 
d\vy\label{eqn:proof:lemma:firstmoment:b} \\
&= - \frac{1}{Z} \int_{\vy \in \cY^p}  \cos(y_u)\cos(y_i) \exp \parenth{\sumlminSetinnerprod{E \setminus E_u}} d\vy\nonumber \\
&\hspace{.35cm}- \frac{1}{Z} \int_{\vy \in \cY^p}  \sin(y_u)\sin(y_i) \exp \parenth{\sumlminSetinnerprod{E \setminus E_u}} d\vy \nonumber\\
&= - \frac{1}{Z} \left(\int_{-\pi}^\pi \cos(y_u) dy_u \right) \int_{\vy_{-u} \in \cY^{p-1}}  \cos(y_i) \exp \parenth{\sumlminSetinnerprod{E \setminus E_u}} d\vy_{-u} \nonumber\\
&\hspace{.35cm}- \frac{1}{Z} \left(\int_{-\pi}^\pi \sin(y_u) dy_u \right) \int_{\vy_{-u} \in \cY^{p-1}}  \sin(y_i) \exp \parenth{\sumlminSetinnerprod{E \setminus E_u}} d\vy_{-u} \label{eqn:proof:lemma:firstmoment:c}\\
&=0 \label{eqn:proof:lemma:firstmoment:d}.
\end{align}
where \eqref{eqn:proof:lemma:firstmoment:a} follows from \eqref{eqn:trigIdentity:a};  
\eqref{eqn:proof:lemma:firstmoment:b} follows from \eqref{eqn:defn:jointdensity:innerprod};
and the separation of the integral with $y_u$ in \eqref{eqn:proof:lemma:firstmoment:c} follows because the sum of edges in $E \setminus E_u$  clearly will not involve any edges containing node $u$.

$\E\brackets{\Xijs(\utheta_u)} = 0$ by a symmetric argument to above.
\end{proof} 

Note, here the similarity with the analogous proof for the Ising model
\cite[Lemma 1]{vuffray_ising}. In that context, they take advantage of the group structure by summing over the group $\{-1,1\}$, to arrive at an expected value of zero. Here, we perform a similar operation where we exploit how the integral of a sinusoidal function over a period in\eqref{eqn:proof:lemma:firstmoment:d} is zero.

\begin{lemma}
$\E\brackets{\Xijc(\uthetau^*)^2 + \Xijs(\uthetau^*)^2}=1$. \label{lemma:secondmoment}
\end{lemma}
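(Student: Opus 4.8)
The plan is to exploit the Pythagorean identity first to collapse the two squared statistics into a single exponential, and then to evaluate the resulting expectation by recognizing the von Mises conditional structure established in \Cref{lemma:univariate:Conditional}.

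First I would note that since $\cos^2(\cdot) + \sin^2(\cdot) = 1$, the sum $\Xijc(\uthetau^*)^2 + \Xijs(\uthetau^*)^2$ loses all dependence on the particular edge $(i,j)$ and collapses to a single exponential,
\begin{align}
\Xijc(\uthetau^*)^2 + \Xijs(\uthetau^*)^2 = \exp\parenth{-2\sumlminSetinnerprod{E_u}}.
\end{align}
I would then recall from the derivation in \Cref{lemma:univariate:Conditional} that, after reparametrizing via $\kappa_{uk}$ and $\mu_{uk}$, the exponent can be written as $\sumlminSetinnerprod{E_u} = A(\vy_{-u})\cos(y_u - \theta_u)$, where the concentration $A(\vy_{-u}) = \abs{\sum_{k \in V\setminus u} \kappa_{uk}\exp(i(y_k - \mu_{uk}))}$ and the mean $\theta_u$ both depend only on $\vy_{-u}$. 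Since $(i,j)\in E_u$ forces $i=u$ or $j=u$, this is exactly the structure that singles out node $u$.

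Next I would compute the expectation by the tower property, conditioning on $\vy_{-u}$ and using that $Y_u \mid Y_{-u}$ is von Mises with concentration $A(\vy_{-u})$ and mean $\theta_u$:
\begin{align}
\E\brackets{\exp\parenth{-2A\cos(y_u - \theta_u)} \given{\big} \vy_{-u}} &= \int_{-\pi}^{\pi} \frac{\exp\parenth{A\cos(y_u - \theta_u)}}{2\pi I_0(A)}\,\exp\parenth{-2A\cos(y_u-\theta_u)}\, dy_u \nonumber\\
&= \frac{1}{2\pi I_0(A)}\int_{-\pi}^{\pi}\exp\parenth{-A\cos(y_u - \theta_u)}\, dy_u = 1,
\end{align}
where the last equality uses periodicity and the evenness of the modified Bessel function, namely $\int_{-\pi}^{\pi}\exp(-A\cos\phi)\,d\phi = 2\pi I_0(-A) = 2\pi I_0(A)$. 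Because the inner conditional expectation equals $1$ for every value of $\vy_{-u}$, the outer expectation over $\vy_{-u}$ is also $1$, which is the claim.

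The crux of the argument is the cancellation inside the integral: the squared statistic contributes $\exp(-2A\cos(\cdot))$ while the von Mises conditional density contributes $\exp(+A\cos(\cdot))$, leaving $\exp(-A\cos(\cdot))$, whose normalizing constant is again $2\pi I_0(A)$ precisely because $I_0$ is even. This is the circular analogue of the second-moment identity for the Ising ISO in Vuffray et al., where the two-point normalization played the same role; here the von Mises normalization supplies the needed identity. The only subtlety to be careful about is keeping $A$ and $\theta_u$ as deterministic functions of $\vy_{-u}$ throughout, so that the inner integral is genuinely a one-dimensional von Mises computation rather than a joint one.
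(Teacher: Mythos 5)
Your proof is correct, but it takes a genuinely different route from the paper's. Both arguments begin with the same Pythagorean collapse to $\E\brackets{\exp\parenth{-2\sumlminSetinnerprod{E_u}}}$, but from there the paper stays with the joint integral: it writes the expectation as $\frac{1}{Z}\int_{\vy\in\cY^p} \exp\parenth{\sumlminSetinnerprod{E\setminus E_u}}\exp\parenth{-\sumlminSetinnerprod{E_u}}d\vy$, isolates the inner integral over $y_u$, and applies the change of variables $y_u \mapsto y_u+\pi$; since every edge in $E_u$ contains $u$ exactly once, the identities $\cos(a+\pi)=-\cos(a)$ and $\sin(a+\pi)=-\sin(a)$ flip the sign of the $E_u$ sum, the integrand recombines into the full unnormalized density, and the expression equals $Z/Z=1$ without ever introducing a normalization constant. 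You instead condition on $\vY_{-u}$ via the tower property, invoke the amplitude--phase form $\sumlminSetinnerprod{E_u}=A(\vy_{-u})\cos(y_u-\theta_u)$ together with the von Mises conditional of \Cref{lemma:univariate:Conditional}, and reduce everything to the one-dimensional Bessel identity $\int_{-\pi}^{\pi}\exp(-A\cos\phi)\,d\phi = 2\pi I_0(A)$. Your version is more modular—it reuses machinery the paper has already built and makes the cancellation $\exp(A\cos(\cdot))\cdot\exp(-2A\cos(\cdot))=\exp(-A\cos(\cdot))$ completely explicit—at the cost of carrying the $\kappa,\mu$ reparametrization and the $2\pi I_0(A)$ normalization through the computation, and of handling the conditional's parameters as functions of $\vy_{-u}$ (which you correctly flag as the main subtlety). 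It is worth noticing that the two proofs exploit the identical underlying symmetry: your final step $I_0(-A)=I_0(A)$ is, in integral form, exactly the shift-by-$\pi$ invariance that the paper applies directly to the joint integral. The paper's packaging parallels the Ising-model argument of Vuffray et al.\ more transparently (where negation of a spin plays the role of the $\pi$-shift), whereas yours leans on the specifically von Mises, single-node structure of the conditional—elegant here, but tied to this particular model.
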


\newcommand{\thetasabs}{}
\newcommand{\thetasabc}{}

\begin{proof}
\begin{align}
& \E\brackets{\Xijc(\utheta_u^*)^2 + \Xijs(\utheta_u^*)^2}  \nonumber \\
&= \E\brackets{ \parenth{\cos^2(Y_j-Y_i) +  \sin^2(Y_j-Y_i)}\exp \parenth{-2\sumlminSetinnerprod{E_u}}}
\label{eqn:proof:lemma:secondmoment:a} \\
    &=
    \E\brackets{ \exp \parenth{-2\sumlminSetinnerprod{E_u}}}
\nonumber
    \\
   &=
   \frac{1}{Z} \int_{\vy \in \cY^p}  
 \exp \parenth{\sumlminSetinnerprod{E}} \exp \parenth{-2\sumlminSetinnerprod{E_u}} 
d\vy \label{eqn:proof:lemma:secondmoment:b} \\
&= \frac{1}{Z} \int_{\vy \in \cY^p}  
 \exp \parenth{\sumlminSetinnerprod{E \setminus E_u}} \exp \parenth{-\sumlminSetinnerprod{E_u}} 
d\vy \nonumber \\
    &= \frac{1}{Z} \int_{\vy_{-u}\in\cY^{p-1}}
    \exp \parenth{\sumlminSetinnerprod{E \setminus E_u}} \int_{y_u \in \cY}
    \exp \parenth{-\sumlminSetinnerprod{E_u}}  dy_u d\vy_{-u} \nonumber \\ 
    &= \frac{1}{Z} \int_{\vy_{-u}\in\cY^{p-1}}
    \exp \parenth{\sumlminSetinnerprod{E \setminus E_u}} \int_{y_u \in \cY + \pi}
    \exp \parenth{-\sumlminSetinnerprod{E_u}}  dy_u d\vy_{-u} \label{eqn:proof:lemma:secondmoment:c} \\ 
    &= \frac{1}{Z} \int_{\vy_{-u}\in\cY^{p-1}}
    \exp \parenth{\sumlminSetinnerprod{E \setminus E_u}} \int_{y_u \in \cY }
    \exp \parenth{\sumlminSetinnerprod{E_u}}  dy_u d\vy_{-u} \label{eqn:proof:lemma:secondmoment:d} \\ 
    &= \frac{1}{Z} \int_{\vy \in\cY^{p}}
    \exp \parenth{\sumlminSetinnerprod{E}} d\vy \label{eqn:proof:lemma:secondmoment:e} \\
    &= 1 \label{eqn:proof:lemma:secondmoment:f} 
\end{align}
where 
\eqref{eqn:proof:lemma:secondmoment:a} follows from
\eqref{eqn:defn:Xij:innerprod}; \eqref{eqn:proof:lemma:secondmoment:b}
follows from
\eqref{eqn:defn:jointdensity:innerprod}; 
\eqref{eqn:proof:lemma:secondmoment:c} follows from the definition of $\tlmstar$ in \eqref{eqn:defn:tlmstar} and the fact that the integral of a periodic function over any period is equivalent; \eqref{eqn:proof:lemma:secondmoment:d} follows from the definition of $\tlmstar$ in \eqref{eqn:defn:tlmstar} and the facts that $\cos(a+\pi)=-\cos(a)$ along with $\sin(a+\pi)=-\sin(a)$; \eqref{eqn:proof:lemma:secondmoment:e} follows from \eqref{eqn:defn:jointdensity:innerprod}; and 
\eqref{eqn:proof:lemma:secondmoment:f} also follows from \eqref{eqn:defn:jointdensity:innerprod}.

The sine component can be shown by analogous argument.
\end{proof}

We again note similarities to \cite[Lemma 2]{vuffray_ising}, where they take advantage of the symmetry of the sum in the exponent with respect a negation of the random variable (e.g., $\sigma_{ij} \xrightarrow[]{} -\sigma_{ij}$). Here, we take advantage of the symmetry of periodic functions in that an integral over any period is equivalent.

\begin{lemma}
\label{lemma:maxX}
For a graphical model defined by \eqref{eq:naturaljointdensity}
with $\beta$ defined by \eqref{eqn:defn:alpha:beta} and maximum degree $d$, 
$\max \braces{\abs{\Xijc(\uthetau^*)},\abs{\Xijs(\uthetau^*)}} \leq \exp(\beta d)$.
\end{lemma}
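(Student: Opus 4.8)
The plan is to bound each of $\abs{\Xijc(\uthetau^*)}$ and $\abs{\Xijs(\uthetau^*)}$ by peeling off the sinusoidal prefactor and then controlling the remaining exponential factor by the largest coupling and the maximum degree. First I would invoke the representation \eqref{eqn:defn:Xij:innerprod} and the elementary bounds $\abs{\cos(y_j-y_i)} \leq 1$ and $\abs{\sin(y_j-y_i)} \leq 1$, so that both components share the same dominating term:
\[
\abs{\Xijc(\uthetau^*)} \leq \exp\parenth{-\sumlminSetinnerprod{E_u}}, \qquad \abs{\Xijs(\uthetau^*)} \leq \exp\parenth{-\sumlminSetinnerprod{E_u}}.
\]
It therefore suffices to upper bound this common exponential, i.e.\ to lower bound the exponent $\sumlminSetinnerprod{E_u}$.

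Next I would re-express each inner product in the exponent using the reparametrization $\theta^*_{lm,c} = \kappa_{lm}\cos(\mu_{lm})$ and $\theta^*_{lm,s} = \kappa_{lm}\sin(\mu_{lm})$ together with the trigonometric identity \eqref{eqn:trigIdentity:a}, which gives $\innerprod{\thetalmstar}{\tlmstar} = \kappa_{lm}\cos(y_m - y_l - \mu_{lm})$. The key observation is that $\kappa_{lm} = \sqrt{(\theta^*_{lm,c})^2 + (\theta^*_{lm,s})^2} \geq 0$ and $\cos(\cdot) \geq -1$, so each summand obeys the per-edge bound $-\kappa_{lm}\cos(y_m - y_l - \mu_{lm}) \leq \kappa_{lm} \leq \beta$, with $\beta$ the maximal coupling from \eqref{eqn:defn:alpha:beta}.

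Finally I would count the terms. Since $E_u$ consists precisely of the edges incident to node $u$, the cardinality $\abs{E_u}$ equals the degree of $u$, which is at most $d$. Summing the per-edge bound over $E_u$ yields $-\sumlminSetinnerprod{E_u} \leq \beta d$, whence both $\abs{\Xijc(\uthetau^*)}$ and $\abs{\Xijs(\uthetau^*)}$ are at most $\exp(\beta d)$, and taking the maximum over the two components gives the claim.

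As for the main obstacle: there is no deep difficulty here, as the argument is a routine worst-case bound. The one point deserving care is the direction of the inequalities. Because we are bounding $\exp$ of a \emph{negated} sum, we need a lower bound on the exponent, and this is where the nonnegativity of $\kappa_{lm}$ is essential: multiplying the bound $\cos(\cdot) \geq -1$ by the nonnegative factor $\kappa_{lm}$ preserves the direction of the inequality, and I would state this explicitly rather than leave it implicit.
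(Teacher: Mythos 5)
Your proof is correct and follows essentially the same route as the paper's: drop the sinusoidal prefactor via $\abs{\cos}\leq 1$, rewrite each edge term as $\kappa_{lm}\cos(y_m-y_l-\mu_{lm})$, and bound the negated sum by $\beta d$ using $\cos(\cdot)\geq -1$ and $\abs{E_u}\leq d$. Your explicit remark that nonnegativity of $\kappa_{lm}$ is what preserves the inequality direction is a nice touch that the paper leaves implicit, but it is the same argument.
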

\begin{proof}
We note that
\begin{align}
    \abs{\Xijc(\uthetau^*)} &= \abs{-\cos(y_j-y_i)\exp\left(-\sum_{(l,m)\in E_u} \thetaijc^*\cos(y_m-y_l)+\thetaijs^*\sin(y_m-y_l)\right)}\nonumber\\
    &\leq \abs{\exp\left(-\sum_{(l,m)\in E_u} \thetaijc^*\cos(y_m-y_l)+\thetaijs^*\sin(y_m-y_l)\right)}\label{eq:follows-bounded-cos}\\
    &= \abs{\exp\left(-\sum_{(l,m)\in E_u} \sqrt{\thetaijc^{*2}+\thetaijs^{*2}}\cos\parenth{y_m-y_l -\arctan\parenth{\frac{\thetaijs}{\thetaijc}}}\right)}\nonumber\\
    &=\abs{\exp\left(-\sum_{(l,m)\in E_u} \kappa_{l,m}^*\cos\parenth{y_m-y_l -\arctan\parenth{\frac{\thetaijs^*}{\thetaijc^*}}}\right)}\nonumber\\
    &\leq \exp(\beta d), \label{eq:follows-beta-def} 
\end{align}
where \eqref{eq:follows-bounded-cos} follows since $\abs{\cos(u)} \leq 1$ and \eqref{eq:follows-beta-def} follows from the definition of $\beta$ given in \eqref{eqn:defn:alpha:beta} and the fact that $|E_u| \leq d$. The sine component holds by analogous argument.
Thus both $\abs{\Xijc}$ and $\abs{\Xijs}$ are upper bounded by $\exp(\beta d)$ and thus so is their maximum.
\end{proof}

With the three components required to invoke Bernstein's inequality assembled, we bound the maximum component of the gradient of the ISO in the following lemma. 
\begin{lemma}
    Consider a graphical model given by \eqref{eq:naturaljointdensity} with p nodes, maximum degree d, and maximal coupling $\beta$ given by \eqref{eqn:defn:alpha:beta}. If $n \geq \exp(2\beta d) \ln(\frac{4p}{\epsilon})$, we have
    \beqa
        \norm{\nabla S_n (\uthetau^*)}_{\infty} \leq 4 \sqrt{\frac{\ln \frac{4p}{\epsilon}}{n}},
    \eeqa
with probability at least $1-\epsilon$. \label{lemma:gradbound}
\end{lemma}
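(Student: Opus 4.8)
The plan is to apply Bernstein's inequality (\Cref{defn:bernstein}) componentwise to the gradient and then close with a union bound. Fix an edge $(i,j) \in E_u$ and look at the cosine coordinate $\frac{\partial}{\partial \thetaijc} S_n(\uthetau^*) = \frac{1}{n}\sum_{k=1}^n \Xkijc(\uthetau^*)$, which is an average of $n$ i.i.d.\ copies of $\Xijc(\uthetau^*)$. The three preceding lemmas supply exactly the hypotheses Bernstein needs: \Cref{lemma:firstmoment} gives that the summands are mean-zero, \Cref{lemma:secondmoment} gives $\E\brackets{\Xijc(\uthetau^*)^2} \le \E\brackets{\Xijc(\uthetau^*)^2 + \Xijs(\uthetau^*)^2} = 1$, and \Cref{lemma:maxX} gives the almost-sure bound $M = \exp(\beta d)$.

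First I would write the two-sided tail bound for the event $\braces{\abs{\frac{\partial}{\partial \thetaijc} S_n(\uthetau^*)} \ge t}$, substituting the variance proxy $\sum_{k=1}^n \E\brackets{\Xkijc(\uthetau^*)^2} \le n$ and $M = \exp(\beta d)$ into \Cref{defn:bernstein}, to obtain a tail of the form $2\exp\!\parenth{-\frac{t^2 n/2}{1 + \frac{1}{3}\exp(\beta d)\,t}}$; the sine coordinates are identical by the ``analogous argument'' clauses already invoked.

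Next I would select the threshold $t = 4\sqrt{\ln(4p/\epsilon)/n}$ and use the sample-size hypothesis $n \ge \exp(2\beta d)\ln(4p/\epsilon)$ to tame the Bernstein denominator. That hypothesis forces $t \le 4\exp(-\beta d)$, hence $\exp(\beta d)\,t \le 4$ and $1 + \frac{1}{3}\exp(\beta d)\,t \le \frac{7}{3}$. Since $t^2 n = 16\ln(4p/\epsilon)$, the exponent is at most $-\frac{24}{7}\ln(4p/\epsilon)$, so each of the (at most) $2(p-1)$ coordinates fails with probability at most $2\parenth{\epsilon/4p}^{24/7}$. A union bound over all coordinates yields total failure probability below $4p\parenth{\epsilon/4p}^{24/7} \le \epsilon$, the final inequality using $24/7 > 1$ and $\epsilon/4p \le 1$. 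On the complementary event every coordinate is bounded by $t$, which is precisely $\norm{\nabla S_n(\uthetau^*)}_\infty \le 4\sqrt{\ln(4p/\epsilon)/n}$, thereby also certifying a valid choice of $\lambda$ in \Cref{cond:1}.

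The main obstacle is purely the bookkeeping inside the Bernstein exponent: I must confirm that the sample-size condition makes the boundedness (linear-in-$t$) term subdominant to the variance term, so that the bound inherits the sub-Gaussian rate $\sqrt{\ln(\cdot)/n}$ rather than the slower $\ln(\cdot)/n$ rate Bernstein produces in the heavy-tailed regime. Tracking the numerical constants so that the final bound is exactly $4\sqrt{\ln(4p/\epsilon)/n}$, and verifying that the union-bound arithmetic closes at $\epsilon$, is the only delicate step.
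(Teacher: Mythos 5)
Your proposal is correct and takes essentially the same approach as the paper: componentwise Bernstein's inequality fed by \Cref{lemma:firstmoment}, \Cref{lemma:secondmoment}, and \Cref{lemma:maxX}, followed by a union bound over the gradient coordinates, with the sample-size hypothesis $n \geq \exp(2\beta d)\ln(4p/\epsilon)$ used exactly to keep the Bernstein denominator $O(1)$ so the sub-Gaussian rate survives. The only difference is bookkeeping: the paper inverts the exponent (setting $s$ equal to it and solving for $t$, which yields the sharper threshold $2\sqrt{\ln(4p/\epsilon)/n}$ with per-coordinate failure probability $\epsilon/2p$), whereas you substitute $t = 4\sqrt{\ln(4p/\epsilon)/n}$ directly and verify the exponent is at least $\tfrac{24}{7}\ln(4p/\epsilon)$ — both arguments close correctly.
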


\begin{proof}
We apply Bernstein's inequality to bound the absolute value of one component of the gradient:
\begin{align*}
\mathbb{P}\parenth{\abs{\frac{\partial}{\partial \theta_{ij,c}} S_n(\uthetau^*)} \geq t} \leq 2 \exp\parenth{-\frac{\frac{1}{2}t^2}{\sum_{k=1}^n \E\brackets{\parenth{\frac{\Xkijc}{n}}^2}+ \frac{1}{3}M t }}.
\end{align*}
Note, that the sine and cosine components of the gradient can be bounded in the exact same way. Using Lemmas \ref{lemma:firstmoment}, \ref{lemma:secondmoment}, and \ref{lemma:maxX} we can rewrite this as
\begin{align}
\mathbb{P}\parenth{\abs{\frac{\partial}{\partial \theta_{ij,c}} S_n(\uthetau^*)} \geq t} \leq 2 \exp\parenth{-\frac{\frac{1}{2}t^2n}{1+ \frac{1}{3}\exp(\beta d) t }} . \label{eq:l4-bernstein}
\end{align}

Let $s$ be the argument of the exponent in \eqref{eq:l4-bernstein}. That is, let $s = \frac{\frac{1}{2}t^2n}{1+ \frac{1}{3}\exp(\beta d) t }$. Solving for $t$, we have
\begin{align}
    t = \frac{1}{3}\brackets{\frac{s}{n}\exp(\beta d)+ \sqrt{\parenth{\frac{s}{n}\exp(\beta d)}^2+18\frac{s}{n}}}. \label{eq:l4-t}
\end{align}
We can substitute in \eqref{eq:l4-t} into \eqref{eq:l4-bernstein} and also note that for $n > s \exp(2\beta d)$, we can replace \eqref{eq:l4-t} with $2\sqrt{\frac{s}{n}}$. Thus, we have that
\begin{align*}
    \mathbb{P}\parenth{\abs{\frac{\partial}{\partial \theta_{ij,c}} S_n(\uthetau^*)} \geq 2\sqrt{\frac{s}{n}}} \leq 2 \exp\parenth{-s}.
\end{align*}
Letting $s = \ln\parenth{\frac{4p}{\epsilon}}$, we obtain an expression including $\epsilon$
\begin{align*}
    \mathbb{P}\parenth{\abs{\frac{\partial}{\partial \theta_{ij,c}} S_n(\uthetau^*)} \geq 2\sqrt{\frac{\ln\parenth{\frac{4p}{\epsilon}}}{n}}} \leq \frac{\epsilon}{2p}.
\end{align*}
Applying the union bound over all $2p$ components of the gradient, we can then conclude that
\begin{align*}
    \norm{\nabla S_n(\uthetau^*)}_{\infty} \leq 2\sqrt{\frac{\ln\parenth{\frac{4p}{\epsilon}}}{n}}
\end{align*}
with probability at least $1-\epsilon$ if $n > \ln\parenth{\frac{4p}{\epsilon}} \exp\parenth{2\beta d} $.
\end{proof}

Thus, we have shown Condition 1 can be satisfied with probability at least $1-\epsilon$, if $\lambda > 4 \sqrt{\frac{\ln \frac{4p}{\epsilon}}{n}}$.

\subsection{Restricted Strong Convexity Properties the Wave Model}
To satisfy Condition 2, we need to bound the remainder of the Taylor expansion of the ISO around its minimizer. That is, for some perturbation away from the minimum, we aim for the error to be lower bounded by a constant multiple of the perturbation's norm. Since the ISO is a convex function, intuitively we would like to think about the remainder of the Taylor expansion as some quadratic term and use properties of its Hessian to bound it. In the following lemma, we show that it is indeed bounded by a quadratic.

\begin{lemma}\label{lemma:taylorlb} 
Define $H_u^n$, the empirical correlation matrix, as 
\begin{equation}
    H_u^n = \frac{1}{n}\sum_{k=1}^n (\utk_u)^T \utk_u.  \label{eq:empirical-cov-mat}
\end{equation} The remainder of the Taylor expansion of the ISO around its minimizer, $\delta S_n(\Delta_u,\utheta^*)$, is lower bounded as
    \[\delta S_n(\Delta_u,\utheta^*) \geq  \frac{\exp(-2\beta d)}{2 + \norm{\Delta}_1}  \Delta_u H_n \Delta_u\].
\end{lemma}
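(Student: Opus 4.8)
The plan is to reduce the claim to a per-sample scalar inequality. First I would write the ISO as an average of elementary exponential terms, $\ISO = \frac{1}{n}\sum_{k=1}^n f_k(\uthetau)$ with $f_k(\uthetau) = \exp\parenth{-\innerprod{\uthetau}{\utk_u}}$, where $\utk_u$ is the stacked sufficient-statistic vector of cosines and sines on the edges $E_u$ evaluated at sample $k$ (so each entry of $\utk_u$ has magnitude at most $1$). Writing $a_k = \innerprod{\utheta_u^*}{\utk_u}$ and $b_k = \innerprod{\Delta_u}{\utk_u}$, a direct computation $\nabla f_k(\utheta_u^*) = -\utk_u\exp(-a_k)$ gives the exact per-sample Taylor remainder
\begin{align*}
\delta f_k(\Delta_u) = f_k(\utheta_u^* + \Delta_u) - f_k(\utheta_u^*) - \innerprod{\nabla f_k(\utheta_u^*)}{\Delta_u} = e^{-a_k}\parenth{e^{-b_k} - 1 + b_k},
\end{align*}
so that $\delta S_n(\Delta_u,\utheta^*) = \frac{1}{n}\sum_{k=1}^n e^{-a_k}\parenth{e^{-b_k}-1+b_k}$. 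Since each summand is nonnegative, it suffices to bound the two factors separately and uniformly in $k$.

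For the prefactor I would bound $\abs{a_k} \leq \sum_{(l,m)\in E_u}\parenth{\abs{\theta^*_{lm,c}} + \abs{\theta^*_{lm,s}}} \leq 2\beta d$, using $\abs{\cos},\abs{\sin}\leq 1$, the fact that $\abs{\theta^*_{lm,c}}\leq \kappa_{lm}\leq\beta$ and $\abs{\theta^*_{lm,s}}\leq\kappa_{lm}\leq\beta$ by \eqref{eqn:defn:alpha:beta}, and $\abs{E_u}\leq d$. Hence $e^{-a_k}\geq e^{-2\beta d}$ for every $k$. For the second factor I would invoke the elementary scalar inequality
\begin{align*}
e^{-x} - 1 + x \geq \frac{x^2}{2+\abs{x}}, \qquad x \in \reals,
\end{align*}
which follows from the integral form of the remainder, $e^{-x}-1+x = \int_0^x (x-t)e^{-t}\,dt$, together with a case split on the sign of $x$. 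Applying it with $x = b_k$ and noting that $\abs{b_k} = \abs{\innerprod{\Delta_u}{\utk_u}} \leq \norm{\Delta}_1$ (again because all entries of $\utk_u$ are bounded by $1$), the larger denominator only weakens the bound, giving $e^{-b_k}-1+b_k \geq \frac{b_k^2}{2+\abs{b_k}} \geq \frac{b_k^2}{2+\norm{\Delta}_1}$.

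Combining the two bounds term by term yields $\delta f_k(\Delta_u) \geq \frac{e^{-2\beta d}}{2+\norm{\Delta}_1}\,b_k^2$. Averaging over $k$ and recognizing
\begin{align*}
\frac{1}{n}\sum_{k=1}^n b_k^2 = \frac{1}{n}\sum_{k=1}^n \Delta_u^T (\utk_u)^T \utk_u\, \Delta_u = \Delta_u^T H_u^n \Delta_u
\end{align*}
produces the stated bound. The main obstacle is the scalar inequality $e^{-x}-1+x\geq x^2/(2+\abs{x})$: its verification via the integral remainder and a sign split is what pins down both the constant $2$ in the denominator and the legitimacy of replacing the sample-dependent $\abs{b_k}$ by the global $\norm{\Delta}_1$. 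Everything else is bookkeeping that parallels the Ising ISO analysis of Vuffray et al., with the crude bound $\abs{a_k}\leq 2\beta d$ accounting for the factor of $2$ in the exponent.
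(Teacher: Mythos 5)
Your proof is correct and takes essentially the same approach as the paper's: both factor the per-sample Taylor remainder as $e^{-a_k}\,\rho(b_k)$ with $\rho(z)=e^{-z}-1+z$, bound the prefactor below by $e^{-2\beta d}$ and $\abs{b_k}$ above by $\norm{\Delta_u}_1$, and apply $\rho(z)\geq z^2/(2+\abs{z})$ before averaging to recognize $\Delta_u^\top H_u^n \Delta_u$. The only difference is cosmetic: you verify the scalar inequality and the per-sample factorization explicitly, whereas the paper cites Vuffray et al.\ for the former and leaves the latter as a direct computation.
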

\begin{proof}

Define $\rho(z)$ as
\beqa
\rho(z) &=& \exp(-z)-1+z \label{eqn:defn:rho}
\eeqa
and note that it can be shown that \cite{vuffray_ising}:
\beqa
\rho(z) &\geq& \frac{z^2}{2+\abs{z}} \label{eqn:rho:inequality}
\eeqa

Note that the remainder of the first order Taylor expansion of the ISO be as follows:
\begin{align}
    \delta S_n(\Delta_u,\utheta^*) &= S_n(\utheta^* + \Delta_u) - S_n(\utheta^*) - \innerprod{\nabla S_n(\utheta)}{\Delta_u} \label{eqn:proof:lemma:taylorlb:a}\\
    &= \frac{1}{n} \sum_{k=1}^n \exp\parenth{-\sumijEu \thetasijc \cos(\Yk_j-\Yk_i) - \thetasijs \sin(\Yk_j-\Yk_i)} \nonumber\\
                                & \quad \times \rho\parenth{\sumijEu \Delta_{ij,c} \cos(\Yk_j-\Yk_i) + \Delta_{ij,s} \sin(\Yk_j-\Yk_i)},
\end{align}
where \eqref{eqn:proof:lemma:taylorlb:a}
 follows from \eqref{eqn:defn:Taylorerror}. This can be shown by simply writing out the expression and factoring into this form.
We state two analogous inequalities to what were derived for the Ising ISO \cite[eqn 40-41]{vuffray_ising},
\begin{align}
    \abs{\sum_{(i,j)\in E_u} \Delta_{ij,c} \cos(\Yk_j-\Yk_i) + \Delta_{ij,s} \sin(\Yk_j-\Yk_i) } &\leq \norm{\Delta_u}_1\\
    \exp\left(-\sumijEu \thetasijc \cos(\Yk_j-\Yk_i) - \thetasijs \sin(\Yk_j-\Yk_i)\right) &\geq \exp(-2\beta d).
\end{align}
Combining these two facts, we obtain
\begin{align*}
    \delta S_n(\Delta_u,\utheta^*) &\geq \frac{\exp(-2\beta d)}{2 + \norm{\Delta}_1}\frac{1}{n}\sum_{k=1}^n \left(\sumijEu \Delta_{ij,c} \cos(\Yk_j-\Yk_i) + \Delta_{ij,s} \sin(\Yk_j-\Yk_i)\right)^2\\
    &=  \frac{\exp(-2\beta d)}{2 + \norm{\Delta}_1}\frac{1}{n}\sum_{k=1}^n \left( \Delta_u^T \utk_u \right)^2\\
    &= \frac{\exp(-2\beta d)}{2 + \norm{\Delta}_1}\frac{1}{n}\sum_{k=1}^n  \Delta_u (\utk_u)^T \utk_u \Delta_u\\
    &= \frac{\exp(-2\beta d)}{2 + \norm{\Delta}_1} \Delta_u H_u^n \Delta_u,
\end{align*}
\end{proof}

\begin{lemma}\label{lemma:covconc}
    Let $H_u = \E\brackets{\ut\ut^\top}$ be the correlation matrix. Elements of the empirical correlation matrix given in \eqref{eq:empirical-cov-mat} concentrate such that
    \begin{equation*}
        \abs{H_{u,lm}^n-H_{lm}} \leq \delta,
    \end{equation*}
    with probability at least $1-\epsilon$, when $n \geq \frac{2}{\delta^2}\ln{\frac{4p^2}{\epsilon}}$.
\end{lemma}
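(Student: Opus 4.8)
The plan is to observe that each entry of the empirical correlation matrix in \eqref{eq:empirical-cov-mat} is a sample mean of i.i.d.\ bounded random variables, and then apply an entrywise concentration inequality followed by a union bound over the entries of $H_u$. First I would expand a generic entry. Writing $t_l^{(k)}$ for the $l$-th component of the sufficient-statistic vector $\utk_u$ (each such component is either a $\cos(y_m-y_l)$ or a $\sin(y_m-y_l)$ for some edge in $E_u$), the $(l,m)$ entry is
\[
H^n_{u,lm} = \frac{1}{n}\sum_{k=1}^n t_l^{(k)} t_m^{(k)},
\]
the empirical mean of the i.i.d.\ variables $Z_k \triangleq t_l^{(k)} t_m^{(k)}$ with $\E\brackets{Z_k} = H_{lm}$. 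Because each component is a sine or cosine, $\abs{t_l^{(k)}} \le 1$, so $Z_k \in [-1,1]$ and in particular $\abs{Z_k}\le 1$. This boundedness is exactly what drives the concentration; note that, unlike the gradient bound in \Cref{lemma:gradbound}, there is no reweighting by the ISO exponential here, so no $\exp(\beta d)$ factor enters and the summands are bounded by an absolute constant.

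Second, I would apply a two-sided tail bound for bounded i.i.d.\ averages (Hoeffding's inequality) to the centered sum $\frac1n\sum_k (Z_k - H_{lm})$. Since each $Z_k$ lies in an interval of length $2$, this yields
\[
\prob{\abs{H^n_{u,lm} - H_{lm}} \ge \delta} \le 2\exp\!\parenth{-\frac{n\delta^2}{2}}.
\]
Bernstein's inequality from \Cref{defn:bernstein}, applied to $Z_k - H_{lm}$ whose variance is at most $1$, gives the same $n\delta^2$ scaling up to lower-order terms in $\delta$, so either route works.

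Third, I would take a union bound over the entries of $H_u$. The vector $\ut_u$ has $2\abs{E_u} \le 2(p-1)$ components, so the symmetric matrix $H_u$ has at most $(p-1)(2p-1) \le 2p^2$ distinct entries. Summing the per-entry failure probability over these entries bounds the overall failure probability by $4p^2\exp(-n\delta^2/2)$. Requiring this to be at most $\epsilon$ and solving for $n$ gives the stated condition $n \ge \frac{2}{\delta^2}\ln\frac{4p^2}{\epsilon}$, which establishes $\abs{H^n_{u,lm} - H_{lm}} \le \delta$ simultaneously for all $(l,m)$ with probability at least $1-\epsilon$.

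I do not expect a genuine obstacle: this is a routine bounded-difference concentration argument mirroring the analogous step for the Ising ISO. The only points requiring care are (i) confirming that every entry's summand is a product of two sinusoids and hence bounded by $1$, and (ii) the bookkeeping that pins down the constant $4p^2$ in the logarithm, which comes from pairing the factor $2$ of the two-sided tail bound with the $\le 2p^2$ distinct entries of the symmetric correlation matrix.
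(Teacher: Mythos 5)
Your proposal is correct and matches the paper's own proof essentially step for step: both bound each entry of the empirical correlation matrix as a sample mean of products of sinusoids lying in $[-1,1]$, apply Hoeffding's inequality to get the per-entry tail $2\exp(-n\delta^2/2)$, and union bound over the $O(p^2)$ entries of the symmetric matrix to arrive at the $4p^2$ factor and the stated sample-size requirement $n \geq \frac{2}{\delta^2}\ln\frac{4p^2}{\epsilon}$. The only difference is trivial bookkeeping in how the entry count is folded into the constant $4p^2$ (the paper counts strict upper-triangular entries of the $(2p-2)$-sided matrix, you include the diagonal), which does not affect the result.
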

\begin{proof}
Note that all entries of $H_u$ are bounded in an interval $[-1,1]$, since they are formed from the products of sines and cosines. Therefore, we can apply Hoeffding's inequality for bounded random variables. Applying the inequality, we have that
\begin{align}
    \mathbb{P}\brackets{\abs{H^n_{u,lm}-H_{u,lm}} > \delta} \leq 2\exp\parenth{-\frac{n\delta^2}{2}}.
\end{align}
Using the union bound over all upper triangular elements of the $2p-2$ sided matrix,
\begin{align}
    \mathbb{P}\brackets{\abs{H^n_{u,lm}-H_{u,lm}} > \delta \hspace{.2cm} \forall (i,j) \in E_u} &\leq \frac{(2p-2)(2p-3)}{2}2\exp\parenth{-\frac{n\delta^2}{2}} \nonumber \\
    &\leq (4p^2 -10p +6) \exp \parenth{-\frac{n\delta^2}{2}}\nonumber\\
    &\leq 4p^2  \exp \parenth{-\frac{n\delta^2}{2}} \label{eq:ecm-concentration},
\end{align}

where \eqref{eq:ecm-concentration} holds for $p\geq0.6$. 
\end{proof}

We will choose $\delta$ appropriately later for the final theorem.\\
\newcommand{\Hfull}{H_{\text{full}}}

\begin{lemma}
    Let $\gamma_u > 0$ be the minimum eigenvalue of the covariance matrix, $H_u $. Then,  $\Delta_u^\top H_u \Delta_u \geq \gamma_u \norm{\Delta_u}_2^2$.
\end{lemma}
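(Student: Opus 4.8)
The plan is to recognize this statement as the standard Rayleigh-quotient lower bound for a symmetric positive semidefinite matrix, where $\gamma_u$ plays the role of the smallest eigenvalue. The essential structural fact I would first establish is that $H_u = \E\brackets{\ut\ut^\top}$ is symmetric and positive semidefinite: symmetry is immediate since $\parenth{\ut\ut^\top}^\top = \ut\ut^\top$ and expectation preserves symmetry, and positive semidefiniteness follows because for any fixed vector $v$, $v^\top H_u v = \E\brackets{v^\top \ut\ut^\top v} = \E\brackets{\parenth{v^\top \ut}^2} \geq 0$. This guarantees that the spectral theorem applies, that all eigenvalues are real and nonnegative, and that $\gamma_u > 0$ is well defined as the minimum eigenvalue.

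Next I would invoke the spectral decomposition $H_u = Q \Lambda Q^\top$, where $Q$ is orthogonal and $\Lambda = \mathrm{diag}(\lambda_1, \ldots, \lambda_m)$ collects the eigenvalues, each satisfying $\lambda_i \geq \gamma_u$ by the definition of $\gamma_u$ as the minimum. The key computation is then a change of coordinates: setting $v = Q^\top \Delta_u$, one expands
\begin{align*}
\Delta_u^\top H_u \Delta_u = \Delta_u^\top Q \Lambda Q^\top \Delta_u = v^\top \Lambda v = \sum_{i} \lambda_i v_i^2 \geq \gamma_u \sum_i v_i^2 = \gamma_u \norm{v}_2^2.
\end{align*}

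Finally I would close the argument by using that $Q$ is orthogonal, so it preserves the Euclidean norm: $\norm{v}_2^2 = \norm{Q^\top \Delta_u}_2^2 = \norm{\Delta_u}_2^2$. Chaining this with the displayed inequality yields $\Delta_u^\top H_u \Delta_u \geq \gamma_u \norm{\Delta_u}_2^2$, as claimed. There is no substantive obstacle here; this is the elementary Courant–Fischer bound, and the only point requiring a moment of care is confirming symmetry and positive semidefiniteness up front so that the eigenvalues are real and $\gamma_u$ is a legitimate minimum eigenvalue. It is worth emphasizing that this lemma is the linchpin connecting \Cref{lemma:taylorlb} and \Cref{lemma:covconc} to \Cref{cond:2}: once the empirical correlation matrix $H_u^n$ is shown to concentrate around $H_u$ (so its minimum eigenvalue stays bounded below), this bound converts the quadratic form $\Delta_u H_u^n \Delta_u$ appearing in the Taylor remainder into the restricted-strong-convexity guarantee $\delta S_n(\Delta_u, \utheta^*) \geq \gamma \norm{\Delta_u}_2^2$ required by the definition.
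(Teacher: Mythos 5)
Your proof is correct for the statement as literally phrased, but it proves only the easy half of what the paper's proof actually delivers, and the two arguments differ genuinely in where the work goes. Your route---symmetry and positive semidefiniteness of $H_u = \E\brackets{\ut\,\ut^\top}$, spectral decomposition, and the Rayleigh/Courant--Fischer bound---treats the strict positivity $\gamma_u > 0$ as a hypothesis handed to you, and given that hypothesis the inequality is indeed elementary. The paper, by contrast, spends essentially its entire proof establishing that this hypothesis is non-vacuous: it invokes \Cref{lemma:min-exp-fam} (the model is a minimal exponential family), uses the fact that for a minimal family the Hessian of the log-partition function is strictly positive definite and equals the covariance of the sufficient statistic, deduces that $\E\brackets{\uphi\,\uphi^\top} = \nabla^2 \log Z + \E\brackets{\uphi}\,\E\brackets{\uphi}^\top \succ 0$, and then observes that $H_u$ is a principal submatrix of $\E\brackets{\uphi\,\uphi^\top}$, hence itself positive definite by Sylvester's criterion, so that $\gamma_u = \lambda_{\min}(H_u) > 0$; the final quadratic-form inequality is then stated without ceremony. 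This difference matters downstream: in \Cref{lemma:rsc} and the main theorems, $\gamma_u$ appears in denominators and in the sample-complexity bounds, so if $H_u$ could be singular the restricted-strong-convexity constant would degenerate to zero and those results would be empty. Your proof buys a clean, self-contained justification of the Rayleigh bound (which the paper glosses over), but it does not discharge the lemma's real burden in this paper; to do that you would still need the positive-definiteness argument via minimality of the exponential family, or some substitute for it.
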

\begin{proof}
Consider the full graphical model in \eqref{eq:jointdensity}. Our strategy will use\Cref{lemma:min-exp-fam} and take advantage of the properties of minimal exponential families to show positive definiteness of $H_u$.

By \Cref{lemma:min-exp-fam}, we know that the model given in \eqref{eq:naturaljointdensity} is a minimal exponential family. From the theory of exponential families, we also know that the Hessian of the cumulant function (or log partition function) is strictly positive definite \cite{wainwright2008graphical}. That is,
\begin{align}
    \nabla^2 \log(Z) \succ 0.
\end{align}
However, we also know that the Hessian is the covariance of the sufficient statistic, which means we have the following equivalent statement:
\begin{align}
    \nabla^2 \log(Z) = \E\brackets{\uphi\; \uphi^\top} - \E\brackets{\uphi}\E\brackets{\uphi}^\top \succ 0. \label{eqn:Hessian:cumuluant:posdef}
\end{align}
From \eqref{eqn:Hessian:cumuluant:posdef} we note that $\E\brackets{\uphi\; \uphi^\top}$ is positive definite since $\E\brackets{\uphi}\E\brackets{\uphi}^\top$ is an outer product and thus positive semi-definite. Now, we are in a position to reason about $H_u$.

Let $\Hfull = \E\brackets{\uphi\;\uphi^\top}$. Since $\ut_u$ is the subset of the entries of $\uphi$ that only pertain to entries that involve the node $u$, $H_u$ is a principal minor of $\Hfull$. By Sylvester's criterion, $H_u$ must be positive definite if $\Hfull$ is positive definite. Thus we have that for any $\Delta_u \in \Re^{p-1}$,
\begin{align}
    \Delta_u^\top H_u \Delta_u > \gamma_u \norm{\Delta}_u^2, \label{eqn:proof:lemma:quadraticform}
\end{align}
where $\gamma_u = \lambda_{min}(H_u) > 0$, is the smallest eigenvalue of $H_u$. 
\end{proof}

\begin{lemma}
    Consider the graphical model given by \eqref{eq:jointdensity} with p nodes, maximum degree d, and maximal coupling $\beta$. For all $\epsilon > 0$ and $R > 0$, the ISO satisfies the strong convexity condition with
    \beqa
        \delta S_n(\Delta_u,\utheta) \geq \frac{\gamma_u \exp{\parenth{-2\beta d}}}{4(1+2\sqrt{d}R)}  \norm{\Delta_u}_2^2 ,
    \eeqa
    with probability at least $1-\epsilon$ when $n \geq \frac{8}{\gamma_u^2}\ln \frac{4p^2}{\epsilon}$ such that $\norm{\Delta_u}_2 \leq R$ and $\norm{\Delta_u}_1 \leq 4\sqrt{d}\norm{\Delta_u}_2$. \label{lemma:rsc}
\end{lemma}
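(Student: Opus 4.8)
The plan is to chain together the three preceding lemmas, peeling off the two $\Delta$-dependent factors in the bound of \Cref{lemma:taylorlb} one at a time. That lemma gives, on the event it holds,
\[
\delta S_n(\Delta_u,\utheta^*) \;\geq\; \frac{\exp(-2\beta d)}{2 + \norm{\Delta_u}_1}\,\Delta_u^\top H_u^n \Delta_u .
\]
First I would dispose of the scalar prefactor using the restriction $\Delta_u \in K$. Since $\norm{\Delta_u}_1 \leq 4\sqrt{d}\,\norm{\Delta_u}_2 \leq 4\sqrt{d}R$ on the ball of radius $R$, we have $2 + \norm{\Delta_u}_1 \leq 2(1 + 2\sqrt{d}R)$, so the prefactor is bounded below by $\frac{\exp(-2\beta d)}{2(1+2\sqrt{d}R)}$. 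This already produces the denominator $4(1+2\sqrt{d}R)$ in the target bound, once a further factor of $\tfrac12$ is extracted from the quadratic form in the next step.

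The core of the argument is then to show that the \emph{empirical} quadratic form is, with high probability, at least half its population value, i.e. $\Delta_u^\top H_u^n \Delta_u \geq \frac{\gamma_u}{2}\norm{\Delta_u}_2^2$. I would write $\Delta_u^\top H_u^n \Delta_u = \Delta_u^\top H_u \Delta_u + \Delta_u^\top (H_u^n - H_u)\Delta_u$, lower bound the population term by $\gamma_u \norm{\Delta_u}_2^2$ using the minimum-eigenvalue lemma (the one asserting $\Delta_u^\top H_u \Delta_u \geq \gamma_u \norm{\Delta_u}_2^2$ with $\gamma_u = \lambda_{\min}(H_u) > 0$, which itself rests on the minimal-exponential-family property from \Cref{lemma:min-exp-fam}), and control the perturbation term through the entrywise concentration of \Cref{lemma:covconc}. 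Choosing the tolerance $\delta = \gamma_u/2$ there makes the required sample size $n \geq \frac{2}{\delta^2}\ln\frac{4p^2}{\epsilon} = \frac{8}{\gamma_u^2}\ln\frac{4p^2}{\epsilon}$, exactly matching the statement, and the event holds with probability at least $1-\epsilon$. Substituting both bounds back then gives
\[
\delta S_n(\Delta_u,\utheta^*) \;\geq\; \frac{\exp(-2\beta d)}{2(1+2\sqrt{d}R)}\cdot\frac{\gamma_u}{2}\norm{\Delta_u}_2^2 \;=\; \frac{\gamma_u \exp(-2\beta d)}{4(1+2\sqrt{d}R)}\norm{\Delta_u}_2^2 .
\]

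The hard part will be bridging the gap between the entrywise guarantee $\abs{H^n_{u,lm} - H_{u,lm}} \leq \delta$ of \Cref{lemma:covconc} and the quadratic-form error $\Delta_u^\top(H_u^n-H_u)\Delta_u$ that actually needs controlling. The naive estimate $\abs{\Delta_u^\top(H_u^n-H_u)\Delta_u} \leq \delta\norm{\Delta_u}_1^2$, combined with $\norm{\Delta_u}_1^2 \leq 16 d\,\norm{\Delta_u}_2^2$ on $K$, carries a factor of $16d$ which, if retained, would force $\delta \lesssim \gamma_u/d$ and inflate the sample complexity by a factor of $d^2$ relative to the stated $\frac{8}{\gamma_u^2}\ln\frac{4p^2}{\epsilon}$. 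Recovering the clean bound therefore requires the cone restriction $\Delta_u \in K$ to be used so as to control the perturbation directly at the level $\delta\norm{\Delta_u}_2^2$ (a restricted-eigenvalue-type estimate) rather than through the crude $\ell_1$ bound; this is precisely the role the cone plays in the high-dimensional M-estimator framework of \cite{Negahban_Ravikumar_Wainwright_Yu_2012}, and checking that it goes through for the von-Mises correlation matrix, whose entries are bounded products of sines and cosines, is the delicate step I would focus on.
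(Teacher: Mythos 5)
Your proposal follows exactly the paper's route: the prefactor bound $2+\norm{\Delta_u}_1 \leq 2(1+2\sqrt{d}R)$ via the cone condition, the decomposition $\Delta_u^\top H_u^n \Delta_u = \Delta_u^\top H_u \Delta_u + \Delta_u^\top(H_u^n-H_u)\Delta_u$, the minimum-eigenvalue lemma for the population term, and \Cref{lemma:covconc} with tolerance $\delta = \gamma_u/2$ for the perturbation. Up to that point the two arguments are identical.

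The ``delicate step'' you flag at the end, however, is a genuine gap --- and it is worth knowing that the paper does not close it either; it passes over it with an invalid manipulation. The paper's proof asserts
\begin{align*}
\Delta_u^\top (H^n_u-H_u)\Delta_u \;=\; \sum_{l,m}\Delta_{u,l}(H^n_{u,lm}-H_{u,lm})\Delta_{u,m} \;\geq\; -\delta\sum_{l,m}\Delta_{u,l}\Delta_{u,m} \;=\; -\delta\norm{\Delta_u}_2^2,
\end{align*}
and both of the last two steps fail: the entrywise lower bound is only valid when every product $\Delta_{u,l}\Delta_{u,m}$ is nonnegative (the inequality reverses on negative products), and in any case $\sum_{l,m}\Delta_{u,l}\Delta_{u,m} = \parenth{\sum_l \Delta_{u,l}}^2$, which is not $\norm{\Delta_u}_2^2$. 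The honest consequence of entrywise concentration is the one you wrote, $\abs{\Delta_u^\top(H^n_u-H_u)\Delta_u} \leq \delta\norm{\Delta_u}_1^2 \leq 16d\,\delta\norm{\Delta_u}_2^2$ on the cone $K$, which forces $\delta = \gamma_u/(32d)$ and inflates the sample requirement to $n \gtrsim d^2\gamma_u^{-2}\ln(p^2/\epsilon)$. This is consistent with the original Ising ISO analysis of Vuffray et al., whose restricted-strong-convexity guarantee does carry polynomial factors of $d$; the lemma as stated here, with $n \geq \frac{8}{\gamma_u^2}\ln\frac{4p^2}{\epsilon}$ and no such factor, appears to be too strong as written (note also that the paper's own proof ends with the condition $n \geq \frac{8}{\gamma^2}\ln\frac{4p^4}{\epsilon}$, which does not match its statement). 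So your instinct is correct on both counts: a restricted-eigenvalue-type argument specific to this correlation matrix would be needed to avoid the $d^2$ inflation, and neither your proposal nor the paper supplies one. Your write-up is the more honest of the two, but to be a complete proof you should either establish such an estimate or restate the lemma with $\delta = \gamma_u/(32d)$ and the correspondingly larger sample complexity, which is the repair that provably goes through.
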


\begin{proof}

From \Cref{lemma:taylorlb} we have that
\begin{align}
    \delta S_n(\Delta_u,\utheta^*) &\geq  \frac{\exp(-2\beta d)}{2 + \norm{\Delta}_1}  \Delta_u^\top H^n_u \Delta_u \nonumber\\
    &\geq \frac{\exp(-2\beta d)}{2 + 4\sqrt{d}R}  \Delta_u^\top H^n_u \Delta_u \label{eq:sn-bound-l8}.
\end{align}
We then note that
\beqa
    \Delta_u^\top H^n_u \Delta_u &= \Delta_u^\top H_u \Delta_u + \Delta_u^\top (H^n_u-H_u) \Delta_u \nonumber \\
    &\geq \gamma_u \norm{\Delta_u}_2^2  + \Delta_u^\top (H^n_u-H_u) \Delta_u \label{eq:bound_Hn-H}.
\eeqa 
We use \Cref{lemma:covconc} to bound the second term in \eqref{eq:bound_Hn-H} with $\delta = \frac{\gamma}{2}$

\begin{align}    
    \Delta_u^\top (H^n_u-H_u) \Delta_u &=  \sum_{i,j} \Delta_{u,i}(H_{u,ij}^n - H_{u,ij})\Delta_{u,j} \nonumber\\
    &\geq \sum_{i,j} \Delta_{u,i} (-\delta)\Delta_{u,j} \nonumber \\
    &= -\delta \sum_{i,j} \Delta_{u,i} \Delta_{u,j} \nonumber\\
    &= -\frac{\gamma}{2} \norm{\Delta_u}_2^2 .
\end{align}
Thus we have from \eqref{eqn:proof:lemma:quadraticform} that
\begin{align}
    \Delta_u^\top H_n \Delta_u &\geq \gamma_u \norm{\Delta_u}_2^2 - \frac{\gamma}{2} \norm{\Delta_u}_2^2 \nonumber\\
    &= \frac{\gamma}{2} \norm{\Delta_u}_2^2 \label{eq:dHd-bound}.
\end{align}

Plugging \eqref{eq:dHd-bound} into \eqref{eq:sn-bound-l8} gives us the result with probability at least $1-\epsilon$ if $n \geq \frac{2}{\delta^2}\ln\frac{4p^2}{\epsilon/p^2} = \frac{8}{\gamma^2}\ln\frac{4p^4}{\epsilon} $ by union bounding over every entry of $H^n_u$ with error probability $\frac{\epsilon}{p^2}$ .
\end{proof}

\subsection{Main Theoretical Results}
\begin{theorem}
    Let the penalty parameter $\lambda = 4\sqrt{\frac{\ln (8p/\epsilon_1)}{n}}$. When $n\geq \max\braces{\frac{8}{\gamma^2}\ln \frac{8p^4}{\epsilon_1},\exp\parenth{2\beta d}\ln \parenth{\frac{8p}{\epsilon_1}} }$, we have that 
    \beqa
    \norm{\hat{\utheta}(\lambda)-\utheta^*}_2 \leq \frac{12}{\gamma} \sqrt{\frac{d \ln\parenth{8p/\epsilon_1}}{n}},
    \eeqa
    with probability $1-\epsilon_1$, for  $\norm{\underline{\kappa}}_{\infty} \leq \beta$, maximum degree $d$ of each node, and some constant $\gamma > 0$. \label{thm:squareerror}
\end{theorem}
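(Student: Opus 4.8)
The plan is to invoke the preceding Proposition, which bounds $\norm{\hat{\utheta}(\lambda)-\utheta^*}_2$ by $3\sqrt{d}\,\lambda/\gamma$ as soon as \Cref{cond:1} and \Cref{cond:2} both hold on a ball of radius $R$ satisfying $R>3\sqrt{d}\,\lambda/\gamma$. The whole argument is then organized around three moves: (i) verify each condition separately, assigning it a failure budget of $\epsilon_1/2$; (ii) combine them by a union bound so that the total failure probability is at most $\epsilon_1$; and (iii) substitute the prescribed $\lambda=4\sqrt{\ln(8p/\epsilon_1)/n}$ into the Proposition's conclusion and simplify.

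First, to secure \Cref{cond:1}, I would apply \Cref{lemma:gradbound} with its tolerance set to $\epsilon_1/2$. Since $\ln\!\big(4p/(\epsilon_1/2)\big)=\ln(8p/\epsilon_1)$, the lemma yields $\norm{\nabla S_n(\uthetau^*)}_\infty \leq 2\sqrt{\ln(8p/\epsilon_1)/n}$ with probability at least $1-\epsilon_1/2$, provided $n\geq \exp(2\beta d)\ln(8p/\epsilon_1)$. The chosen penalty $\lambda=4\sqrt{\ln(8p/\epsilon_1)/n}$ is then precisely twice this bound, so $\lambda\geq 2\norm{\nabla S_n(\uthetau^*)}_\infty$ and \Cref{cond:1} is met.

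Second, to secure \Cref{cond:2}, I would apply \Cref{lemma:rsc}, again with tolerance $\epsilon_1/2$, which certifies restricted strong convexity on the radius-$R$ ball with constant $\gamma$ and probability at least $1-\epsilon_1/2$ once $n\geq \frac{8}{\gamma_u^2}\ln(8p^4/\epsilon_1)$. A union bound then makes both conditions hold simultaneously with probability at least $1-\epsilon_1$, and the sample-size requirement becomes the maximum of the two thresholds, matching the hypothesis of \Cref{thm:squareerror}. Feeding $\lambda=4\sqrt{\ln(8p/\epsilon_1)/n}$ into the Proposition's bound $3\sqrt{d}\,\lambda/\gamma$ collapses the constants to $\frac{12}{\gamma}\sqrt{d\,\ln(8p/\epsilon_1)/n}$, which is exactly the claimed estimate.

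The main obstacle is the self-consistency requirement $R>3\sqrt{d}\,\lambda/\gamma$, which is delicate because the constant produced by \Cref{lemma:rsc} is itself $\gamma=\gamma_u\exp(-2\beta d)/\big(4(1+2\sqrt{d}R)\big)$ and hence shrinks as $R$ grows, so the radius appears on both sides. Substituting this form rearranges the requirement into an inequality of the schematic shape $R(1-c_1\lambda)>c_2\lambda$, with $c_1,c_2$ depending on $d,\beta,\gamma_u$; a valid $R$ exists precisely when $\lambda$ — equivalently $1/\sqrt{n}$ — is small enough, which is where the stated lower bounds on $n$ are genuinely consumed. In writing this up I would also make the bookkeeping explicit about which quantity plays the role of $\gamma$: the effective restricted-strong-convexity constant governs the error bound, whereas the minimum eigenvalue $\gamma_u$ enters the sample-complexity threshold through \Cref{lemma:rsc}, and keeping these distinct avoids a circular constant chase.
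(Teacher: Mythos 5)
Your proposal is correct and follows essentially the same route as the paper's own (two-sentence) proof: set the tolerance in \Cref{lemma:gradbound} and \Cref{lemma:rsc} to $\epsilon_1/2$ each, take a union bound so both \Cref{cond:1} and \Cref{cond:2} hold with probability at least $1-\epsilon_1$, and then substitute $\lambda = 4\sqrt{\ln(8p/\epsilon_1)/n}$ into the Proposition's bound $3\sqrt{d}\,\lambda/\gamma$ to obtain $\frac{12}{\gamma}\sqrt{d\ln(8p/\epsilon_1)/n}$. Your write-up is in fact more careful than the paper's, both in tracking that the effective radius constraint $R > 3\sqrt{d}\,\lambda/\gamma$ is self-referential (since the restricted-strong-convexity constant from \Cref{lemma:rsc} itself depends on $R$) and in distinguishing the eigenvalue $\gamma_u$ from the effective convexity constant --- points the paper's proof passes over in silence.
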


\begin{proof}
Let $\epsilon$ from \Cref{lemma:gradbound} and \Cref{lemma:rsc} be $\epsilon = \frac{\epsilon_1}{2}$. Using the union bound over both lemmas lets us arrive at the result above with with probability $1-\epsilon_1$. 
\end{proof}

\begin{theorem}
    Let the penalty parameter $\lambda = 4\sqrt{\frac{\ln (8p^2/\epsilon_2)}{n}}$. For minimal coupling $\alpha$, maximal coupling $\beta$, and maximum node degree $d$, we have perfect reconstruction of the edge set $E$ with probability $1-\epsilon_2$ if $n \geq \max\braces{\frac{8}{\gamma^2}\ln \frac{8p^5}{\epsilon_2},\exp\parenth{2\beta d}\ln \parenth{\frac{8p^2}{\epsilon_2}}, 2d \ln(\frac{8p^2}{\epsilon_2})\parenth{\frac{\alpha\gamma}{24 }}^{-2}}$ \label{thm:structure}
\end{theorem}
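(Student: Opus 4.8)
The plan is to bootstrap the per-node $\ell_2$ parameter-error guarantee of \Cref{thm:squareerror} into exact recovery of the edge set by thresholding the estimated couplings. Recall from \Cref{lemma:condindep} that an edge $(i,j)$ is present precisely when $\kappa_{ij} > 0$, and from \eqref{eqn:defn:alpha:beta} that every true edge satisfies $\kappa_{ij}^* \ge \alpha$ while every non-edge has $\kappa_{ij}^* = 0$. It therefore suffices to produce estimates $\hat\kappa_{ij}$ that are uniformly within $\alpha/2$ of their true values: thresholding at $\alpha/2$ then separates edges ($\hat\kappa_{ij} > \alpha/2$) from non-edges ($\hat\kappa_{ij} < \alpha/2$) with no errors.

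First I would solve the regularized ISO \eqref{eq:sparse-iso-problem} at each node $u \in \{1,\dots,p\}$ and invoke \Cref{thm:squareerror} at that node with its failure probability set to $\epsilon_2/p$. Substituting $\epsilon_1 = \epsilon_2/p$ turns $\ln(8p/\epsilon_1)$ into $\ln(8p^2/\epsilon_2)$, which is exactly why the penalty in this theorem is $\lambda = 4\sqrt{\ln(8p^2/\epsilon_2)/n}$; the per-node guarantee then reads $\norm{\hat{\utheta}_u - \uthetau^*}_2 \le \frac{12}{\gamma}\sqrt{d\ln(8p^2/\epsilon_2)/n}$ with probability at least $1-\epsilon_2/p$, provided $n$ exceeds the first two terms of the stated maximum. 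These two terms are precisely the sample sizes needed for \Cref{cond:1} and \Cref{cond:2} to hold at a single node, inherited from \Cref{lemma:gradbound} and \Cref{lemma:rsc} with $\epsilon_2/p$ in place of $\epsilon$ (this is what sends $8p^4/\epsilon_1 \mapsto 8p^5/\epsilon_2$ and $8p/\epsilon_1 \mapsto 8p^2/\epsilon_2$). A union bound over the $p$ nodes makes all these events hold simultaneously with probability at least $1-\epsilon_2$.

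Next I would convert the $\ell_2$ bound on the full node-$u$ parameter vector into a per-coordinate bound on the couplings. For any pair $(i,j) \in E_u$, viewing $\kappa_{ij}$ as the Euclidean norm of the two-dimensional subvector $(\thetaijc,\thetaijs)$ and applying the reverse triangle inequality gives $\abs{\hat\kappa_{ij} - \kappa_{ij}^*} \le \sqrt{(\hat\theta_{ij,c}-\thetasijc)^2 + (\hat\theta_{ij,s}-\thetasijs)^2} \le \norm{\hat{\utheta}_u - \uthetau^*}_2$, so a single $\ell_2$ bound controls every coupling incident to $u$ at once. Requiring this bound to fall strictly below $\alpha/2$, i.e. $\frac{12}{\gamma}\sqrt{d\ln(8p^2/\epsilon_2)/n} < \frac{\alpha}{2}$, and solving for $n$ produces the third term $2d\ln(8p^2/\epsilon_2)\parenth{\frac{\alpha\gamma}{24}}^{-2}$ of the maximum, the constant factor furnishing slack to keep the separation strict. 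On the intersection of the $p$ good events, the threshold test then classifies every incident pair correctly at every node; since both the node-$i$ and node-$j$ estimations of a given pair are simultaneously correct, the two directed decisions agree automatically, so the recovered neighborhoods — and hence the full edge set $E$ — are exact.

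The algebraic rearrangement for $n$ and the reverse-triangle step are routine; the point that needs care is the bookkeeping that the sample-complexity requirement genuinely decomposes as the stated maximum of three regimes, with the first two inherited from the per-node conditions at the inflated confidence level and the third arising solely to resolve the minimum-coupling gap $\alpha$. I expect the main obstacle to be confirming that the $\ell_2$ guarantee of \Cref{thm:squareerror} is actually in force before the thresholding argument is applied — specifically, that the restricted-strong-convexity radius $R$ from \Cref{lemma:rsc} and the hypothesis $R > 3\sqrt{d}\,\lambda/\gamma$ required by the $M$-estimator proposition remain compatible with the chosen $n$ and $\lambda$, so that the deviation $\Delta_u$ stays inside the ball on which restricted strong convexity was established.
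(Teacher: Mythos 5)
Your proposal is correct and follows essentially the same route as the paper's own proof: invoke \Cref{thm:squareerror} at each node with failure probability $\epsilon_1=\epsilon_2/p$ (exactly the substitution that turns $8p^4/\epsilon_1$ into $8p^5/\epsilon_2$ and $8p/\epsilon_1$ into $8p^2/\epsilon_2$), threshold the estimated couplings $\hat\kappa_{ij}$ at $\alpha/2$, and union bound over the $p$ nodes. The only substantive difference is in the $\theta$-to-$\kappa$ conversion: your reverse-triangle-inequality step $\abs{\hat\kappa_{ij}-\kappa^*_{ij}}\leq \norm{\hat{\utheta}_u-\uthetau^*}_2$ needs the error only below $\alpha/2$, whereas the paper conservatively demands it below $\alpha/(2\sqrt{2})$, which is precisely what accounts for the factor of $2$ (your noted ``slack'') in the third term of the stated sample complexity.
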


\begin{proof}
Let $\epsilon_1 = \frac{\epsilon_2}{p}$. We declare an edge to exist if its estimated coupling value $\hat{\kappa}_{ij}$ is greater than or equal to $\frac{\alpha}{2}$. Therefore, the error in \Cref{thm:squareerror} must be bounded by $\frac{\alpha}{2\sqrt{2}}$ ensure no false positives. This is because $\kappa_{ij} = \sqrt{\thetaijc^2+\thetaijs^2}$ and we are controlling the error in the $\theta_{ij}$'s rather than the $\kappa_{ij}$'s. This implies that $n \geq 2d\ln(\frac{8p}{\epsilon_1})\parenth{\frac{\alpha\gamma}{48 }}^{-2}$ for a single node. We then apply the union bound to all nodes with probability of error $\frac{\epsilon_2}{p}$, which gives us the desired result with probability $1-\epsilon_2$. 
\end{proof}

\section{Experiments}\label{sec:experiments}

In this section, we describe computational experiments performed to test both the Chow-Liu and ISO algorithms on data. First, we start by examining the graph recovery properties of the inference procedures when the data are generated according to the model. Then, investigate examples of the algorithm's performance in settings with wave-like data.

\subsection{Model Performance and Computation Speed}
\begin{figure}[htp]
    \centering    \includegraphics[width=\linewidth]{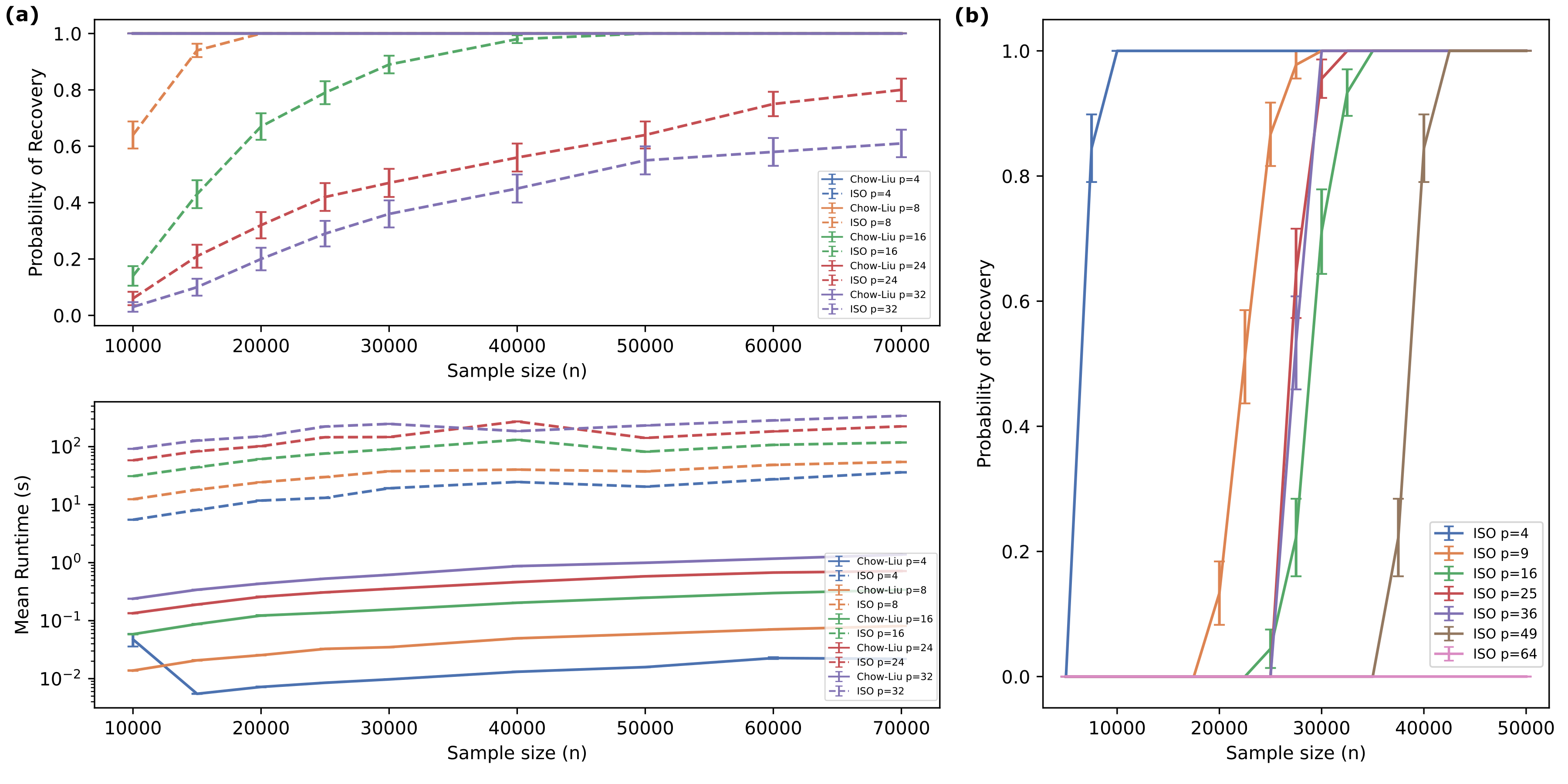}
    \caption{(a) Model Performance Comparison between the Chow-Liu Algorithm and the ISO algorithm. Data generated were according to a tree distribution. (top) Fraction of true graphs recovered over varying sample sizes and numbers of nodes. (bottom) Computation time over varying sample sizes and numbers of nodes. The Chow-Liu gives perfect recovery of the graph structure, while the ISO approaches with increasing sample size.
    (b) Performance of ISO algorithm in graph recovery on four connected graphs.}
    \label{fig:2}
\end{figure}

\subsubsection{Chow-Liu vs ISO on Dependence Trees}
The Chow-Liu dependence tree algorithm is widely used due to its efficiency in terms of computational and space complexity. However, it is limited to modeling dependence trees. On the other hand, the ISO algorithm is more flexible and capable of inferring arbitrary graph structures, but requires more computation. This makes it important to analyze the trade-offs between speed and performance for both methods.

To investigate these trade-offs, we first generate data from \eqref{eq:jointdensity} while constraining the structure to a dependence tree. This ensures that the Chow-Liu algorithm can still recover the graph structure. We vary two parameters: the sample size (n) and the graph size (number of nodes, p). For each (n,p) pair, we generate 100 random dependence trees and draw n samples from each tree. For these trees all $\kappa_{ij}$ are set to 1.

Next, we fit both the Chow-Liu and ISO models to the datasets and calculate the fraction of the 100 randomly generated graphs correctly recovered by each method. This serves as a measure of performance. Additionally, we record the computational time for both algorithms. For the ISO algorithm, we set $\epsilon_2 = 0.05$ and define the regularization parameter $\lambda = 4\sqrt{\frac{\ln (8p^2/\epsilon_2)}{n}}$, to be consistent with \Cref{thm:structure}. The regularization parameter and probability of error are the same as in the last experiment. 

Results of the experiment are in \Cref{fig:2}a.
We see the expected performance in graph recovery of the Chow-Liu and ISO methods, where the error bars are one standard deviation of the mean. Across all tested node numbers and sample sizes, we observe that the Chow-Liu method achieves perfect recovery of the dependence tree every time, while the ISO algorithm achieves modest recovery that increases with sample size and decreases with graph size. We notice that the computational time for the Chow-Liu is approximately 3 orders of magnitude faster than the ISO. Note, that in this experiment none of the computations were parallelized in the ISO. However, if one had $K$ cores, in principle, one could achieve up to a $\min\{K,p\}$ fold increase in computation speed.

\subsubsection{ISO on Four-Connected Graphs}
In order to test the ISO's ability to recover graph structure in general, we also test the algorithm on another set of data from the generative model. In this experiment, we arrange nodes on a square grid of varying sizes. Each node in the grid is connected to its neighbors above, below, to the left, and to the right of itself. If there are no immediate neighbors in any of the directions, then the edge wraps around to the other side. This implies that each node has degree 4. We set $\kappa_{ij}=1$, as in the previous experiment, and generate samples via Gibbs sampling, which is a well know Markov chain Monte Carlo  method  \cite{casella1992explaining}. To perform the Gibbs sampling procedure, we initialize our sample with a uniform random vector on $[-\pi,\pi)^p$. We then re-sample every component of the random vector according to the univariate conditional given in \Cref{subsec:univariate_conditional}, using the most recently sampled components every time the univariate conditional is computed. Performing this process once for each of the p-components generates one new sample. The first 10000 samples were discarded as burn-in samples to allow for the settling time of the Markov chain. Since the ISO is a more computationally expensive method, we run 45 trials for each (n,p) pair and then tabulate the fraction of true graphs recovered at each set of parameters. Note for $p=4$, each node is two-connected since the neighbors to up/down and left/right are the same. 

In \Cref{fig:2}b, we see the analogous recovery graph to \Cref{fig:2}a, but for the ISO over four-connected graphs. We note that the performance is as expected, with larger graphs requiring larger sample sizes to achieve the same probability of recovery.

\subsection{Hypothesis Testing on Simulated Wave Data}

\begin{figure}[htp]
    \centering    \includegraphics[width=\linewidth]{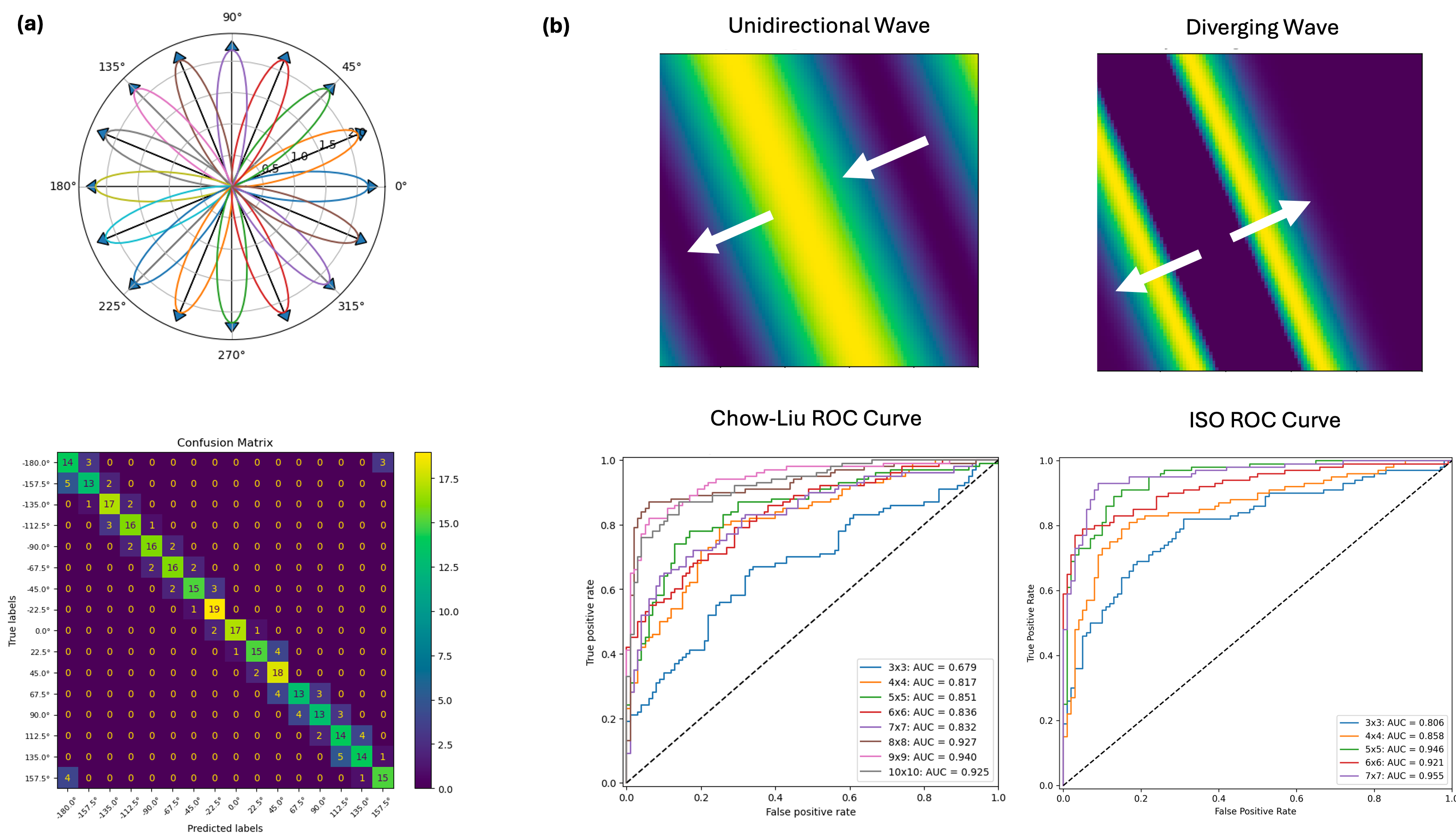}
    \caption{Classification performance on the Chow-Liu and ISO methods. (a) Classification of unidirectional plane waves equally spaced in 16 directions using only the Chow-Liu. The top plot displays the distribution in the direction of each plane wave class. The confusion matrix on the bottom shows good classification with high accuracy. Notice that misclassifications only occur between waves pointing in adjacent directions. (b) Classification between unidirectional and diverging waves using the Chow-Liu method and ISO method. The top plot displays examples of the differing types of waves and their propagation directions. The bottom ROC curve demonstrate diminishing classification returns with higher spatial resolution. Note that the ISO method returns better classification performance (higher AUC) than the Chow-Liu at the same spatial resolutions.}
    \label{fig:3}
\end{figure}

In this section, we describe two experiments done to test the hypothesis testing capability of our wave model on data from the generative model and simulated wave data. In the first experiment, we perform M-ary hypothesis testing on multidirectional plane waves. Here we test only the Chow-Liu Method for hypothesis testing, since M-ary hypothesis testing requires full evaluation of the joint distribution, which is computationally expensive for the full model. In the second experiment, we perform binary hypothesis testing between two classes of waves: unidirectional and diverging. We vary the spatial resolution of our array of data to evaluate classification capability under different measurement conditions. To understand tradeoffs in performance, we evaluate both the Chow-Liu and full wave model using ROC analysis.

\subsubsection{Multidirectional Plane Wave Classification}

To evaluate the model, we use the classical M-ary hypothesis testing framework, where the goal is to classify \( M = 16 \) equally spaced wave propagation directions based on noisy plane wave data.  

We generate the data by first setting up an \( 8 \times 8 \) grid of \((x, y)\) locations for measurements. The phase data is defined as:  
\[
\phi(x, y, t) = K_x x + K_y y - \omega t,
\]
where \( K_x \) and \( K_y \) are the wavenumbers in the \( x \) and \( y \) directions, and \( \omega \) is the angular frequency. Data is sampled at \( T = 200 \) time points with a period of \( \Delta t = 0.02 \) seconds. For each class, the wavenumbers are set as \( K_{x,m} = \cos(\theta_m + \xi) \) and \( K_{y,m} = \sin(\theta_m + \xi) \), where \( \theta_m = \frac{\pi}{8}m \) (for \( m = 0, \dots, 15 \)), and \( \xi \sim VM(0, 30) \) introduces noise via a von-Mises distribution.  

Simulated measurements with noise are given by:  
\[
\underline{v}_m(t) = \cos(\underline{\phi}_m(t)) + \underline{\eta}_t, \quad \text{where } \eta_t \sim \mathcal{N}\left(0, \frac{1}{\sqrt{2}}I\right).
\]  
Phase estimates are extracted using the Hilbert transform by forming the analytic signal (for each component):  
\[
z_{m,i}(t) = v_{m,i}(t) + j H\{v_{m,i}(t)\},
\]  
where $H \braces{\cdot}$ is  the Hilbert transform. We can rewrite the analytic signal as the complex exponential,
\[
z_{m,i}(t) = r_{m,i}(t)e^{j\phi_{m,i}(t)}
\]  
We denote $r_{m,i}(t)$ as the instantaneous amplitude and $\theta_{m,i}(t)$ as the instantaneous phase. The instantaneous phase is used as our estimate of the true phase in the simulation. 

To test the model, we generate 100 datasets for each of the $M$ directions and separate data into train/test sets using 80/20 split stratified by class. We use the maximum likelihood decision rule to classify each test dataset using \eqref{eq:chowliujoint}.

Results for the this experiment are shown in \Cref{fig:3}a. The top plot demonstrates the distribution wave direction for each of the 16 classes of waves and the bottom plot shows a confusion matrix of M-ary hypothesis testing problem. The confusion matrix shows that the Chow-Liu algorithm was able to correctly classify each of the wave directions with high accuracy, achieving 65-95\% accuracy ($76.6\% \pm 8.8\%$) on inferring the exact direction. We note that whenever the classifier made a mistake the mistake was only on the off-diagonal. In other words, only the adjacent two directions clockwise and counter-clockwise to the actual wave direction were predicted when there was a mistake. This demonstrates the ability of our wave model to accurately describe spatial patterns in the phase related to direction of propagation.

\subsubsection{Sensor Array Density Evaluation}
In this experiment, we consider binary hypothesis testing between a normally propagating wave in one direction vs. a diverging wave in two directions, as was previously done using deep learning methods \cite{agrusa2019deep}. Here, we take inspiration from gastric electrophysiology, in which the traveling electrical gastric wave exhibits different characteristics on the stomach in normal individuals as compared to those with delayed gastric emptying. In normal subjects, the gastric wave travels from the top of the stomach (the fundus) to the bottom of the stomach (the antrum) in a regular fashion to push food through the digestive system. However, in those with delayed gastric emptying it has been shown that these slow waves can have abnormal initiations from the antrum and propagate upwards, causing delayed gastric emptying  and/or worsening symptoms \cite{ordog2008interstitial,ANGELI201556,gharibans2019spatial,agrusa2022robust}.

To simulate this class of waves, we consider a simple radial wave on a $1\times1$ unit grid. We let the radial wave have a large semi-major axis and small semi-minor axis compared to the grid size. This is to allow for a diverging wave if the center is on the grid and a unidirectional wave if the center is off-grid. More concretely, we generate the phase data according to
\begin{align*}
    \phi(x,y,t) = K_r\sqrt{0.09\brackets{{(x-c_x)\cos(\theta) - (y-c_y)\sin(\theta)}}^2 + 225\brackets{(x-c_x)\sin(\theta) + (y-c_y)\cos(\theta)}^2 } - \omega t,
\end{align*}

\newcommand{\uc}{\underline{c}}
where $K_r$ is the nominal radial wavenumber, $\theta$ is the angle of rotation, $\omega$ is the angular frequency, and $\uc = \begin{bmatrix}
    c_x & c_y
\end{bmatrix}$ is the center of the wave. To allow for variability between groups and wave direction (to simulate inter-subject variability in anatomy), we allow the parameters of the model to have the following distributions
\begin{align*}
    \begin{matrix}
        K_r \sim \mathcal{N}(1,0.1) & \theta \sim VM(\frac{\pi}{4},60)\\
        \uc_0 \sim \mathcal{U}\parenth{[1,1.5]\times [1,1.5]} & \uc_1 \sim \mathcal{U}\parenth{[0,.75]\times [0,.75]},
    \end{matrix}
\end{align*}
where $\uc_0$ and $\uc_1$ indicate the centers of the wave for the unidirectional and diverging waves, respectively. Note the distributions of $K_r$ and $\theta$ are the same among classes, and $\omega = \frac{2\pi}{20}$ is held constant. The wave data is then computed as
\begin{align*}
    v(x,y,t) = \cos\parenth{\phi(x,y,t)} + \xi_t,
\end{align*}
where $\xi_t \sim \mathcal{N}\parenth{0,\frac{1}{\sqrt{2}}}$ is measurement noise to simulate 0dB signal-to-noise ratio environments.

For statistical analysis, the simulated data is preprocessed through a standard signal processing pipeline to filter out the wave of interest from the measurements and extract phase. To do this we use an $8^{th}$ order zero-phase butterworth filter with a passband of $[0.03, 0.07] $ Hz, since an angular frequency of $\frac{2 \pi}{20}$ is equivalent to $0.05 $ Hz. We then use the Hilbert transform as described in the previous section to obtain estimates of $\underline{\phi}(t)$. To get a sense of the classification ability of this model, 500 datasets of each class, with $T = 200$ time points spaced apart by $\Delta t = 0.02$s are generated for train/test purposes. Note that all 1000 datasets differ by all of $K_r$, $\theta$, and $\underline{c}_0$ or $\underline{c}_1$. We using an 80/20 stratified split of the datasets into train/test sets. The Chow-Liu and ISO method are then applied to the data with $\lambda = 4\sqrt{\frac{\ln (8p^2/\epsilon_2)}{n}}$ and $\epsilon_2 = 0.05$. We note here, that during the training process, each of the time points are treated as i.i.d. To obtain good parameter estimates for the model, we perform the two-step procedure where we 1) fit the regularized model to identify the edge set and then 2) fit the unregularized model with only the edges identified by the regularized model (i.e., hard code all edges not identified in 1 to be zero). We use the log likelihood ratio (LLR) test to perform the binary classification. ROC curves are shown over a range of sensor densities from 3x3 to 10x10 for the Chow-Liu and 3x3 to 7x7 for the ISO in \Cref{fig:3}.

It is important to note, that for binary classification using LLR test we do not actually need to evaluate the partition function in order to get ROC curve analysis. This can be illustrated by first writing down the LLR test,
\begin{align*}
    LL(\underline{y};\underline{\kappa}_1,\underline{\mu}_1) - LL(\underline{y};\underline{\kappa}_0,\underline{\mu}_0)\underset{\hat{H}= 0}{\overset{\hat{H}=1}{\gtrless}} \tau,
\end{align*}
where $\underline{\kappa}_1,\underline{\mu}_1$  and $\underline{\kappa}_0,\underline{\mu}_0$ are the parameter sets for class 1 and 0, $LL$ is the log-likelihood, and $\tau$ is the decision threshold. Since the partition function is only a function of the parameters, and not the data, we can rewrite this as
\begin{align*}
    \frac{1}{n}\sum_{k=1}^n\brackets{\sum_{(i,j)\in E} \kappa_{1,ij}\cos(\yk_j-\yk_i-\mu_{1,ij}) - \sum_{(i,j)\in E} \kappa_{0,ij}\cos(\yk_j-\yk_i-\mu_{0,ij})}&\underset{\hat{H}= 0}{\overset{\hat{H}=1}{\gtrless}} \tau - Z(\underline{\kappa}_1,\underline{\mu}_1) + Z(\underline{\kappa}_0,\underline{\mu}_0)\\
    &\underset{\hat{H}= 0}{\overset{\hat{H}=1}{\gtrless}} \alpha,
\end{align*}
which is simply a shifted threshold. Since the ROC curve is parameterized by the choice of threshold, this change only affects the parametrization, and not the ROC curve itself.

Results for the unidirectional/diverging classification problem are shown in \Cref{fig:3}b. In this problem, we use both the Chow-Liu method and the full wave model fit using the ISO in order to run ROC analysis. The top plots in \Cref{fig:3}b represent a snapshot of a unidirectional and a diverging wave to give a sense of the data being classified. The bottom two plots show the ROC analysis for the Chow-Liu approximation and the ISO derived full wave model. Focusing on the Chow-Liu plot first, we notice that, as expected, increasing sensor density increases classification performance, with a roughly increasing AUC with sensor density. However, we also note that there is a point of diminishing returns in that the performance between the $8 \times 8$, $9 \times 9$, and $10 \times 10$ arrays are all very similar and seem to asymptotically be reaching their limit of classification.  This has important implications for the design of emerging high-density measurement systems tailored to tracking wave patterns \cite{liu2021multimodal,ramezani2024high,kurniawan2022electrochemical}.  In the bottom right plot, we see the analogous curve for the ISO. Immediately, we can see that compared to the Chow-Liu algorithm, the full wave model performs better for the same sensor density. This implies that the full wave model may be necessary for wave-like phenomena, and that assuming a dependence tree structure will result in loss of performance and may necessitate more sensors to enable accurate classification.

\section{Discussion}\label{sec:discussion}

In this work, we present an exponential family probabilistic graphical model for modeling multivariate phase relationships and subsequently present methods to either fit a constrained approximation, using the Chow-Liu dependence tree, or estimate the full model in generality, using the interaction screening objective estimator. We then show analysis for the ISO estimator, analogous to the Ising model, to demonstrate information-theoretically good sample efficiency in estimating parameters of the model \cite{vuffray_ising}. We recognize that while the distribution is the same as that in \cite{cadieu2010phase}, our work for the first time provides  theoretical guarantees for both the structure learning and sample complexity of parameter estimation. We test this claim in experiments, by generating data from both a dependence tree, to match Chow-Liu assumptions, and a four-connected graph, to test the goodness of the ISO estimator. Here, we demonstrated the tradeoffs between the Chow-Liu and ISO, by showcasing that the Chow-Liu is three orders of magnitude faster and gives perfect recovery of dependence trees, but cannot recover the structure of any graph outside of the class of trees.

In addition, we showcase preliminary results on synthetic wave data in multiple scenarios of hypothesis testing using our model. In the binary hypothesis testing problem, we demonstrate further the tradeoffs between Chow-Liu and the ISO by noting that the ROC performance of the ISO estimator outclassed that of the Chow-Liu, albeit being slower (as shown in the previous experiment). This may be because using the full wave model allows for more flexibility in the phase differences between spatial locations, by allowing for disagreement between parameters, while the Chow-Liu approach does not. For example, in the case of Chow-Liu, if we had three nodes 1-2-3 and $\tilde{\mu}_{12} = \frac{\pi}{4}$ and $\tilde{\mu}_{23} = \frac{\pi}{2}$, the maximum likelihood phase difference between nodes $1$ and $3$ must be $\frac{3\pi}{4}$. However, in the full graphical model, we can have, for example, $\mu_{12} = \frac{\pi}{4}$ and $\mu_{23} = \frac{\pi}{2}$ and $\mu_{13} = \frac{2\pi}{3} \neq\mu_{12} + \mu_{23}$. We must take care to note that while we are confident that this performance generalizes to M-ary hypothesis testing,  it is still a challenging problem to evaluate the likelihood of the full wave model in arbitrary dimensions since the partition function is a generally hard integral to compute. One method of computing this integral is by Monte-Carlo integration, which with enough samples is reliable, but computationally expensive.

One point of note is that our model treats all of our samples across time as independent and identically distributed.  On the one hand, waves are time-varying phenomena where there is correlation of the phases of any oscillator at consecutive time points.  On the flipside, however, structured waves (plane, rotating, etc) as considered in \cref{sec:experiments} have phase {\it differences} between pairs of nodes that are time-invariant.  Our ability to rigorously characterize multivariate phase relationships enables opportunities to test hypotheses of how sensory processes with invariances can be represented in neural systems as spatiotemporal relationships with conserved quantities (with traveling waves as a key example) \cite{keller2024spacetime}.  Future iterations of our model nonetheless can  include a time-dependent component to capture higher order interaction over time between the same and neighbor nodes, with one natural approach being a Markov model on the latent parameters.  

Going forward, we plan to investigate the performance and application of our model and inference methods in real wave-like datasets of interest, including datasets with traveling waves across the cortex of the brain and the traveling electrical waves of the gastrointestinal system. In pursuit of the real-world applicability of our model, we also plan to develop  methods to incorporate complex higher-order and time-dependent interactions in the data so that we can fully characterize the intricacies of these phenomena.

More generally, we aim to understand how we can characterize functional relationships of oscillators within the brain and body.  Since $\kappa$ entries in our model being zero encodes conditional independence, our model has an ability to obviate indirect effects in a manner analogous to the success of the graphical lasso for conventional measures of brain connectivity with multivariate Gaussian assumptions \cite{smith2011network,huang2010learning}.
Tools to compare functional brain networks whose functional connections were each estimated with $\ell_1$ regularization (as is done here) can leverage recent approaches developed for graphical lasso that enable obtaining confidence intervals on edge differences in the estimated distributions across  subjects 
\cite{belilovsky2016testing}.  Further building upon this analogy, natural extensions of our $\ell_1$ regularized coupled oscillation estimation approach include exploring the connection of our ISO problem formulation with  distributionally robust estimation, as has been done with covariance estimation and graphical lasso \cite{blanchet2019optimal,cisneros2020distributionally,blanchet2021statistical}.

\section*{Acknowledgements}

We would like to acknowledge support from the Stanford Data Science Scholars Program, the Siebel Scholars Program, the Wu Tsai Neurosciences Institute, and NIH 5U01NS131914.

\bibliographystyle{ieeetr}
\bibliography{template}

\end{document}